\newcommand{\pre}[1]{{^\bullet{#1}}}
\newcommand{\post}[1]{{#1}^\bullet}
\newcommand{\neighb}[1]{{^\bullet{#1}^\bullet}}
\newcommand{\inp}[1]{{^\bigcirc{#1}}}
\newcommand{\outp}[1]{#1^\bigcirc}
\newcommand{\reach}[1]{\lbrack#1\rangle}
\newcommand{\om}[1]{\varphi(#1)}
\newcommand{\omr}[1]{\varphi^{-1}(#1)}
\newcommand{\abs}[1]{\lvert #1 \rvert}
\newcommand{\nat}{\mathbb{N}}
\newcommand{\chan}{\mathfrak{C}}
\newcommand{\unf}[1]{\mathcal{U}(#1)}
\newcommand{\dom}[1]{\text{dom}(#1)}
\newcommand{\rng}[1]{\text{rng}(#1)}
\newtheorem{definition}{Definition}
\newtheorem{proposition}{Proposition}
\newtheorem{theorem}{Theorem}
\newtheorem{lemma}{Lemma}
\newtheorem{corollary}{Corollary}
\newproof{proof}{Proof}
\begin{document}

\begin{frontmatter}



\title{Soundness-Preserving Composition of Synchronously and Asynchronously Interacting Workflow Net Components}

\author[label1]{Luca Bernardinello}
\author[label2]{Irina Lomazova}
\author[label1,label2]{Roman Nesterov}
\author[label2]{Lucia Pomello}

\affiliation[label1]{organization={University of Milano-Bicocca},
            addressline={Viale Sarca 336 - Edificio U14}, 
            city={Milan},
            postcode={20126}, 
            country={Italy}}
        
\affiliation[label2]{organization={HSE University},
     	addressline={11 Pokrovskiy Boulevard}, 
     	city={Moscow},
     	postcode={101000}, 
     	country={Russia}}

\begin{abstract}
In this paper, we propose a compositional approach to constructing correct formal models of information systems from correct models of interacting components.
Component behavior is represented using workflow nets --- a class of Petri nets.
Interactions among components are encoded in an additional interface net. 
The proposed approach is used to model and compose synchronously and asynchronously interacting workflow nets.
Using Petri net morphisms and their properties, we prove that the composition of interacting workflow nets preserves the correctness of components and of an interface.

\end{abstract}


\begin{keyword}
	
	Petri nets \sep workflow nets \sep interaction \sep soundness \sep morphisms \sep composition



\end{keyword}

\end{frontmatter}


\section{Introduction}
\label{}

Formal models are essential for the specification and analysis of a distributed information system behavior. 
The precise semantics of such models helps to prove various important properties, which concern reliability and smooth operation of information systems. 
\emph{Petri nets} \cite{Reisig13} are widely recognized as one of the most convenient formalisms for modeling and analyzing the behavior of complex distributed systems. 

Petri net composition has been extensively studied in the literature. 
Researchers considered various aspects, including \emph{architectural} concepts of compositional modeling and \emph{semantical} issues relating to compositional analysis of Petri net behavior. 
The ubiquity of service-oriented and multi-agent architectures of information systems retains the relevance of further research on these aspects of Petri net composition.

Wolfgang Reisig, in his recent works \cite{Reisig2018,ReiComp20}, defined a general setting for compositional modeling of service-oriented information systems. 
He addressed \emph{architectural} problems behind the composition of Petri net components in the context of algebraic properties.

The compositional analysis of \emph{semantical} aspects has been performed for different classes of Petri nets and behavioral properties. 
Among works in this direction, we note the one \cite{Wolf09} by Christian Stahl and Karsten Wolf, who studied the compositional analysis of deadlock-freeness in open Petri nets.

Boundedness and deadlock-freeness are two components of \emph{soundness} --- the crucial correctness property of \emph{workflow} (WF) nets \cite{Aalstwf02}. 
WF-nets form a class of Petri nets used to model the \emph{control-flow} of processes in information systems. 
The method proposed by C.\,Stahl and K.\,Wolf does not consider the compositional proof of the absence of livelocks --- the third component of soundness.

Our study is focused on semantical aspects of compositional WF-net modeling, namely, the construction of sound WF-nets from sound models of interacting components.
The following example illustrates that unregulated interactions of sound WF-nets can easily violate soundness. Figure \ref{in_ex} shows two WF-net components $N_1$ and $N_2$. 
They both are sound with respect to their initial and final states, where $s_1$ ($s_2$) is the initial state in  $N_1$ ($N_2$), and $f_1$ ($f_2$) is the final state in $N_1$ ($N_2$).

\begin{figure}[h]
	\centering
	\includegraphics[height=4.5cm]{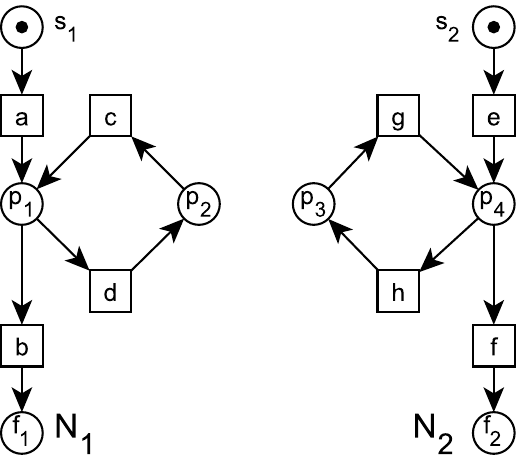}
	\caption{Two sound WF-net components}\label{in_ex}
\end{figure}

Let us suppose that $N_1$ and $N_2$ interact \emph{synchronously}, i.e., execute a simultaneous activity. 
Let transitions $c$ and $g$ correspond to this activity. 
Thus, we need to fuse transitions $c$ and $g$, preserving the original arcs. 
Figure \ref{in_ex21} shows the result, where the merged transition is denoted by $(c, g)$. 
This synchronization produces a deadlock since a marking with tokens in $p_2$ and $p_3$ is not always reachable.

Next, let us suppose that $N_1$ and $N_2$ interact \emph{asynchronously}, i.e., exchange messages through channels. 
Let $N_1$ send messages via transition $d$, and let $N_2$ receive messages via transition $h$. 
Then we add a place $m$ to model a channel between transitions $d$ and $h$. 
We need to connect these transitions with the added place according to sending and receiving operations. 
Figure \ref{in_ex22} shows the insertion of a place between transitions $d$ and $h$. 
This asynchronous interaction leads to potential overflow in the added place $m$. 
Therefore, this composition of $N_1$ and $N_2$ is unbounded.

\begin{figure}[h]
	\centering
	\begin{subfigure}[b]{0.5\textwidth}
		\centering
		\includegraphics[height=4.5cm]{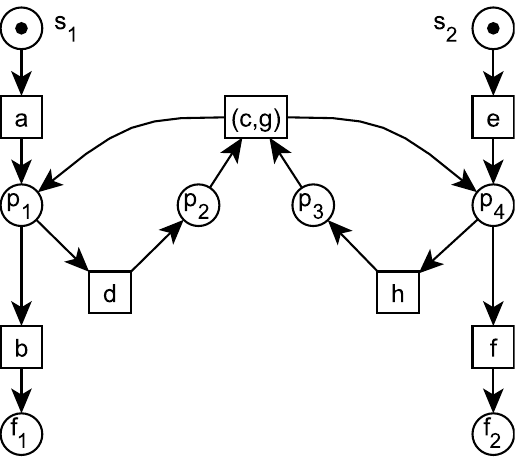}
		\caption{synchronous\label{in_ex21}}
	\end{subfigure}%
	\begin{subfigure}[b]{0.5\textwidth}
		\centering
		\includegraphics[height=4.5cm]{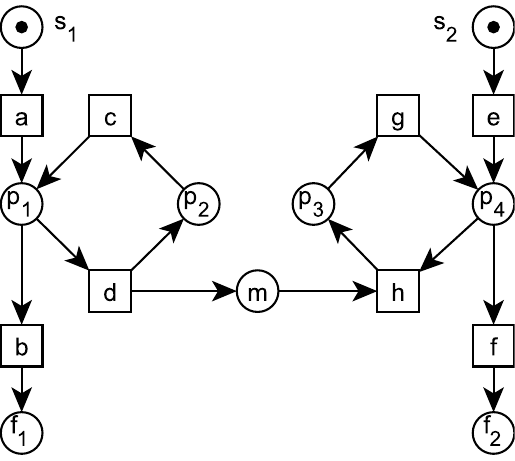}
		\caption{asynchronous\label{in_ex22}}
	\end{subfigure}
	\caption{Interactions between the WF-net components from Fig. \ref{in_ex}}
\end{figure}

Our study discusses theoretical backgrounds to justify a correct composition of interacting workflow nets.
We define an operation of composing synchronously and asynchronously interacting WF-nets. 
This composition is defined on a syntactical level and does not guarantee to preserve soundness.
We use two types of components in compositional modeling: agent models and an interface that describes how agents interact. 
Both agent and interface models are sound WF-nets.

An interface plays a significant role in formulating conditions for a semantically correct composition of sound WF-nets. 
An interface represents the \emph{abstract} view of a complete system, where the behavior of an agent corresponds to a subnet describing requirements imposed on the structure of an agent.
The correspondence between an agent behavior and an interface subnet is defined through an abstraction/refinement relation based on $\alpha$-morphisms \cite{Bernardinello2013}. 
We prove that replacing an interface subnet with the corresponding agent behavior preserves interface soundness.

Thus, the main contributions of our paper are:
\begin{enumerate}
	\item Formal definition and semantical properties of asynchronous-synchronous composition of interacting WF-nets.
	\item Structural and behavioral properties of place refinement and subnet abstraction in WF-nets based on $\alpha$-morphisms.
	\item Proof of the theorem that component refinement in asynchronous-synchronous composition of WF-nets preserves soundness.
\end{enumerate}

The remainder of the paper proceeds as follows.
The next section provides basic definitions on Petri nets, workflow nets, and their behavior.
In Section~\ref{sec:comp}, we define an asynchronous-synchronous workflow net composition and study its properties.
In Section~\ref{sec:alpha}, using $\alpha$-morphisms, we define an abstraction/refinement relation on workflow nets and study the relevant properties of this relation.
Section~\ref{sec:main} describes how to preserve the properties of workflow net components in their asynchronous-synchronous composition through the use of the abstraction/refinement relation.
Section~\ref{sec:relw} gives a review of related works, and Section~\ref{sec:concl} concludes the paper. 


\section{Preliminaries}\label{sec:prel}

This section provides the basic definitions on Petri nets used in the paper.

Let $A, B$ be two sets.
A function $f$ from $A$ to $B$ is denoted by $f \colon A \to B$, where $A$ is the \emph{domain} of $f$ (denoted by $\dom{f}$) and $B$ is the \emph{range} of $f$ (denoted by $\rng{f}$).
A \emph{restriction} of a function $f$ to a subset $A' \subseteq A$ is denoted by $f \vert_{A'} \colon A' \to B$.
A \emph{partial} function $g$ from $A$ to $B$ is a function from $A'$ to $B$, where $A' \subseteq A$.
A partial function is denoted by $g \colon A \nrightarrow B$.
When $g$ is not defined for $a \in A$, we write $g(a) = \perp$.

Let $\nat$ denote the set of non-negative integers. 
A \textit{multiset} $m$ over a set $S$ is a function $m \colon S \rightarrow \nat$.
If $m(s)\geq 1$, we write $s \in m$.
If $m(s) \leq 1$ for all $s \in S$, then $m$ corresponds a set $S' \subseteq S$.

Let $m_1, m_2$ be two multisets over the same set $S$. 
Then $m_1 \subseteq m_2 \Leftrightarrow m_1(s)\leq m_2(s)$ for all $s \in S$. 
Also, $m'=m_1+m_2 \Leftrightarrow m'(s)=m_1(s)+m_2(s)$, $m''=m_1-m_2 \Leftrightarrow m''(s)=\max(m_1(s)-m_2(s), 0)$ for all $s\in S$.

Let $A^+$ denote the set of all finite non-empty \emph{sequences} over $A$, and $A^* = A^+\cup \{\varepsilon\}$, where $\varepsilon$ is the empty sequence.
Then for $w \in A^*$ and $B \subseteq A$, $w \vert_{B}$ denotes the \emph{projection} of $w$ on $B$, i.e., $w\vert_{B}$ is the sub-sequence of $w$ built from elements in $B$.
For example, if $A = \{a, b, c\}$, $B=\{b\}$, and $w = aabbbcc \in A^*$, then $w\vert_{B} = bbb$.
\medskip

A \emph{Petri net} is a triple $N=(P, T, F)$, where $P$ and $T$ are two disjoint sets of places and transitions, i.e., 
$P \cap T = \varnothing$, and $F \subseteq (P \times T) \cup (T \times P)$ is a flow relation.
Pictorially, places are shown by circles, transitions are shown by boxes, and $F$ is shown by arcs.
For $N_1$, shown in Fig. \ref{in_ex}, $P = \{s_1, p_1, p_2, f_1\}$, $T = \{a, b, c, d\}$ and $F=\{(s_1, a), (a, p_1), (p_1, b), (p_1, d), (d, p_2), (p_2, c), (c, p_1),$ $(b, f_1)\}$.

Let $N=(P, T, F)$ be a Petri net, and $X = P \cup T$. 
The set $\pre{x} = \{y \in X \,\vert\, (y, x) \in F\}$ is called the \textit{preset} of $x \in X$. 
The set $\post{x} = \{y \in X \,\vert\, (x, y) \in F\}$ is called the \textit{postset} of $x \in X$.
The set $\neighb{x} = \pre{x} \cup \post{x}$ is called the \textit{neighborhood} of $x \in X$.
$N$ is \emph{P-simple} iff $\forall p_1, p_2 \in P \colon \pre{p_1} = \pre{p_2}$ and $\post{p_1} = \post{p_2}$ implies $p_1 = p_2$.
In our study, we consider Petri nets, such that $\nexists x \in X \colon \pre{x} = \varnothing = \post{x}$ and $\forall t \in T \colon \abs{\pre{t}} \geq 1$ and $\abs{\post{t}} \geq 1$.
Self-loops are forbidden, i.e., $\forall x \in X \colon \pre{x} \cap \post{x} = \varnothing$.

Let $N=(P, T, F)$ be a Petri net, and $A \subseteq X$. 
Then $\pre{A} = \bigcup_{x \in A}\pre{x}$, $\post{A} = \bigcup_{x \in A}\post{x}$, $\neighb{A} = \pre{A} \cup \post{A}$.
Let $N(A)$ denote a \textit{subnet} of $N$ \textit{generated by} $A$, i.e., $N(A) = (P\cap A, T \cap A, F \cap (A \times A))$. 
The set $\inp{N(A)} = \{y \in A \,\vert\, (\exists z \in X \setminus A \colon (z, y) \in F) \text{ or } (\pre{y} = \varnothing)\}$ contains the \textit{input} elements, 
and the set $\outp{N(A)} = \{y \in A \,\vert \,(\exists z \in X \setminus A \colon (y, z) \in F) \text{ or } (\post{y} = \varnothing)\}$ contains the \emph{output} elements of the subnet $N(A)$.
\medskip

A \emph{marking} (state) in a Petri  net $N=(P, T, F)$ is a multiset over $P$. 
A marking $m$ is designated by putting $m(p)$ black tokens inside a place $p \in P$.
Transition $t \in T$ has a \emph{contact} at a marking $m$ if $\pre{t} \subseteq m$ and $\post{t} \cap m \neq \varnothing$. 
A \emph{marked} Petri net is a quadruple $N=(P, T, F, m_0)$, where $(P, T, F)$ is a Petri net and  $m_0$ is the \emph{initial} marking.
Further, the term ``marked'' can be omitted while referring to marked Petri nets.
\medskip

A \emph{state machine} is a connected Petri net $N =(P, T, F)$, where $\forall t \in T\colon \abs{\pre{t}}=\abs{\post{t}}=1$. 
A subnet of a Petri net $N = (P, T, F, m_0)$ identified by a subset of places $A \subseteq P$ and its neighborhood, i.e.,
$N(A \, \cup \, ({\neighb{A}}))$, is a \emph{sequential component} of $N$ iff it is a state machine and has a single token in the initial marking. 
$N$ is \emph{covered} by sequential components if every place in $N$ belongs to at least one sequential component of $N$. 
In this case, $N$ is said to be \emph{state machine decomposable} (SMD).

For instance, subnet $N(A \, \cup \, ({\neighb{A}}))$ identified by the set of places $A = \{s_1, p_1, p_2, f_1\}$ and by the corresponding set of transitions $\neighb{A} = \{a, b, c, d\}$ is a sequential component of the Petri net shown in Fig. \ref{in_ex22}.
However, this Petri net is not state machine decomposable, since there is no sequential component containing place $m$.

\medskip

The behavior of a Petri net is defined according to the \emph{firing rule}, which specifies when a transition may fire and how a state in a Petri net changes.

A marking $m$ in $N=(P, T, F, m_0)$ \emph{enables} a transition $t \in T$, denoted $m\reach{t}$, if $\pre{t} \subseteq m$.
When $t$ \emph{fires}, $N$ evolves to a new marking $m' = m - \pre{t} + \post{t}$.
We write $m \reach{t} m'$.
A sequence $w \in T^*$ is a \emph{firing sequence} of $N =(P, T, F, m_0)$ iff $w=t_1t_2\dots t_n$ and $m_0\reach{t_1} m_1 \reach{t_2}\dots m_{n-1}\reach{t_n} m_n$. Then we write $m_0\reach{w}m_n$. The set of all firing sequences of $N$ is denoted by $F\!S(N)$. 

A marking $m$ in $N =(P, T, F, m_0)$ is \emph{reachable} if $\exists w \in F\!S(N) \colon m_0\reach{w}m$. 
The set of all markings in $N$ reachable from $m$ is denoted by $\reach{m}$. 
$N$ is \emph{safe} iff $\forall p \in \!P$, $\forall m \in \reach{m_0} \colon m(p) \leq 1$. 
Thus, a reachable marking in a safe Petri net is a set of places.

Petri nets covered by sequential components are free of contacts \cite{Rozenberg96}.
That is why Petri nets covered by sequential components are safe.

\medskip

The \emph{concurrent} semantics of a Petri net is captured by its \emph{unfolding}.

Let $N = (P, T, F)$ be a Petri net, and $F^*$ be the reflexive transitive closure of F. Then $ \forall x, y \in P \cup T \colon$
$x$ and y are in \emph{causal} relation, denoted $x \leq y$, if $(x, y) \in F^*$; $x$ and $y$ are in \emph{conflict} relation, denoted $x \# y$, if $\exists t_x, t_y \in T$, such that  $t_x \neq t_y$, $\pre{t_x} \cap \pre{t_y} \neq \varnothing$, and $t_x \leq x$, $t_y \leq y$.
\begin{definition}
	A Petri net $O = (B, E, F)$ is an occurrence net iff:
	\begin{enumerate}
		\item $\forall b \in B \colon \abs{\pre{b}} \leq 1$.
		\item $F^*$ is a partial order.
		\item $\forall x \in B \cup E \colon \{y \in B \cup E \,\vert\, y \leq x\}$ is finite.
		\item $\forall x, y \in B \cup E \colon x \# y \Rightarrow x \neq y$.
	\end{enumerate}
\end{definition}

By definition, $O$ is acyclic. 
Let $M\!in(O)$ denote the set of minimal nodes of $O$ w.r.t. $F^*$, i.e., the elements with the empty preset. 
Since we consider nets having transitions with non-empty presets and postsets, $M\!in(O) \subseteq B$.

\begin{definition}
	Let $N=(P, T, F, m_0)$ be a safe Petri net, $O=(B, E, F)$ be an occurrence net, and $\pi:B\cup E\to P\cup T$ be a map. A couple $(O, \pi)$ is a branching process of $N$ iff:
	\begin{enumerate}
		\item $\pi(B) \subseteq P$ and $\pi(E) \subseteq T$.
		\item $\pi \vert_{M\!in(O)}$ is a bijection from $M\!in(O)$ to $m_0$.
		\item $\forall t \in T\colon \pi \vert_{\pre{t}}$ is a bijection between $\pre{t}$ and $\pre{\pi(t)}$, and similarly for $\post{t}$ and $\post{\pi(t)}$.
		\item $\forall t_1, t_2 \in T \colon$ if $\pre{t_1} = \pre{t_2}$ and $\pi(t_1) = \pi(t_2)$, then $t_1 = t_2$.
	\end{enumerate}
\end{definition}

The \emph{unfolding} of $N$, denoted $\mathcal{U}(N)$, is the maximal branching process of $N$, such that any other branching process of $N$ is isomorphic to a subnet of $\mathcal{U}(N)$, where the map $\pi$ is restricted to the elements of this subnet. 
The map associated with the unfolding is denoted $u$ and called \emph{folding}.
\medskip

\emph{Workflow nets} form a subclass of Petri nets used for modeling processes and services.
They have unique input and output places.
We define a \emph{generalized} workflow net with the initial state $m_0$ (exactly corresponding to its initial marking) and the final state $m_f$ below.

\begin{definition}\label{GWF}
	A generalized workflow (GWF) net $N = (P, T, F, m_0, m_f)$ is a Petri net $(P, T, F,$ $m_0)$ equipped with $m_f$, where:
	\begin{enumerate}
		\item $\forall p \in m_0 \colon \pre{p} = \varnothing$. \label{instwf}
		\item $m_f \subseteq P$ such that $m_f \neq \varnothing$ and $\forall p \in m_f \colon \post{p} = \varnothing$.\label{outstwf}
		\item $\forall x \in P \cup T \,\,\exists s \in m_0 \,\,\exists f \in m_f \colon (s, x) \in F^* \text{ and } (x, f) \in F^*$.\label{connwf}
	\end{enumerate}
\end{definition}

Since SMD Petri nets are safe, GWF-nets covered by sequential components are safe as well.
The important correctness property of GWF-nets is \emph{soundness} \cite{Aalst11} formally defined below.

\begin{definition}\label{sound}
	A GWF-net $N = (P, T, F, m_0, m_f)$ is sound iff:
	\begin{enumerate}
		\item $\forall m \in \reach{m_0} \colon m_f \in \reach{m}$.\label{sndprop}
		\item $\forall m \in \reach{m_0} \colon m_f \subseteq m \Rightarrow m=m_f$.\label{sndclean}
		\item $\forall t \in T \, \exists m \in \reach{m_0} \colon m\reach{t}$.\label{sndlive}
	\end{enumerate}
\end{definition}

Soundness is threefold. 
Firstly, the final state in a sound GWF-net is reachable from any reachable state (\emph{proper termination}). 
Secondly, the final state in sound GWF-net cannot be contained in any other reachable state (\emph{clean termination}). 
Finally, each transition in a sound GWF-net can fire.

\section{Asynchronous-Synchronous Composition of GFW-Nets}\label{sec:comp}

In this section, we develop an approach to modeling synchronous and asynchronous interactions among components in a system.
Components are represented using GWF-nets.
We introduce \emph{transition labels} and a corresponding \emph{AS-composition} which merges synchronous transitions and adds channels between asynchronously interacting transitions in component models.
Some basic properties of the AS-composition are also studied here.

\subsection{Labeled GWF-Nets}

We introduce two kinds of transition labels to model the \emph{asynchronous} and \emph{synchronous} interaction among system components.
Their behavior is modeled with the help of GWF-nets covered by sequential components.
\medskip

When components interact asynchronously, they exchange messages using \emph{channels}.
Correspondingly, components can \emph{send} (\emph{receive}) messages \emph{to} (\emph{from}) channels.
Let $\chan = \{c_1, c_2, \dots, c_k \}$ denote the set of all channels.
Channels are represented by places. 
The set $\Lambda$ of sending/receiving actions implemented over channels  is defined as $\Lambda = \{c!, c? \, \vert \, c \in \chan \}$, where ``$c!$'' indicates sending a message to a channel $c$, and ``$c?$'' indicates receiving a message from a channel $c$.
Thus, some transitions in a GWF-net are labeled by asynchronous actions from $\Lambda$.

For actions in $\Lambda$, we define a function $\mathbf{ch}\colon \Lambda \to \chan$, which maps a sending/receiving action to a corresponding channel, i.e. $\mathbf{ch}(c!) = \mathbf{ch}(c?) = c$. 
Given $\Lambda' \subseteq \Lambda$, $\mathbf{ch}(\Lambda') = \bigcup_{\lambda' \in \Lambda'} \mathbf{ch}(\lambda')$.

A GWF-net may also have transitions with \emph{complement} asynchronous labels (``$c!$'' is complement to ``$c?$'' and vice versa).
They are denoted using overline, i.e., $\overline{c!} = c?$ and $\overline{c?} = c!$.
Then we require that there exists a place labeled by ``$c$'' connecting all transitions labeled by ``$c!$'' to all transitions labeled by ``$c?$''.
Labeled places are necessary to establish the logical dependence between transitions with complement labels, i.e., receiving from a channel $c$ should be done after a message is sent to this channel.
However, other places can also connect transitions with complement labels.
\medskip

Synchronous interactions among components result in merging transitions representing simultaneous actions.
In our work, simultaneous actions are modeled using identical transition labels.
Let $S = \{ s_1, s_2, \dots, s_n \}$ denote the set of synchronous actions.
Similar to the asynchronous interaction, some transitions in a GWF-net are labeled by synchronous actions from $S$.
\medskip

We formalize these aspects of synchronous and asynchronous interactions among components in Definition \ref{lwf_def}, where a GWF-net is equipped with two transition labeling functions and a place labeling function.
Figure \ref{lgwf_ex} shows a labeled GWF-net, where labeled places are distinguished by the smaller size.
By convention, labels are put either inside or near nodes.

\begin{definition}\label{lwf_def}
	Let $\chan$ be a set of channels, and $\Lambda = \{c!, c? \,\vert\, c \in \chan\}$ be a set of sending/receiving actions over $\chan$.
	Let $S$ be a set of synchronous actions.
	A labeled GWF-net (LGWF-net) $N = (P, T, F, m_0, m_f, h, \ell, k)$ is a GWF-net $(P, T,$ $F, m_0, m_f)$ with two transition labeling functions $h$, $\ell$ and a place labeling function $k$, where:
	\begin{enumerate}
		\item $h \colon T \nrightarrow \Lambda \text{ is a partial function}$.
		\item $\ell \colon T \nrightarrow S$ is a partial function, $\dom{h} \cap \dom{\ell} = \varnothing$.
		\item $k \colon P \nrightarrow \chan$ is a partial injective function, such that: \label{chanlab}
		\begin{enumerate}
			\item $\forall t_1,t_2 \in T \colon$  if
			$h(t_1) = c!$ and $h(t_2)=c?$, then  \\ $\exists p\in P \colon k(p) = c$ and $(t_1, p), (p, t_2) \in F$; \label{chcon}
			\item $\forall p \in P \colon$ if
			$k(p) = c$, then ($\pre{p} \neq \varnothing$ and  $\forall t \in \pre{p} \colon h(t) = c!$) and \\ ($\post{p} \neq \varnothing$ and $\forall t \in \post{p} \colon h(t)=c?$).
		\end{enumerate}
	\end{enumerate}		
\end{definition} 

\begin{figure}[h]
	\centering
	\includegraphics[width=5.5cm]{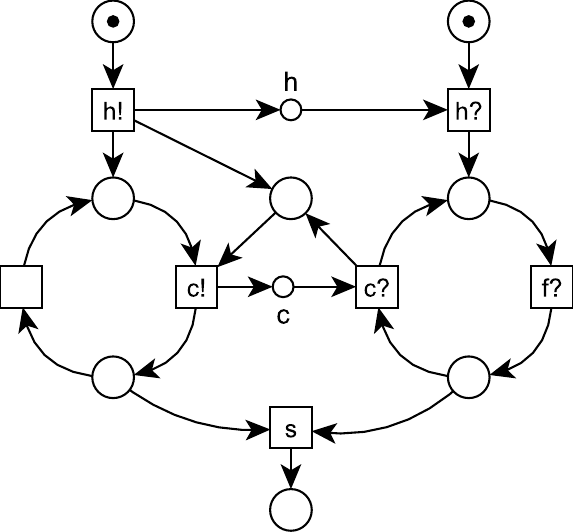}
	\caption{Labeled generalized workflow net}\label{lgwf_ex}
\end{figure}

By Definition \ref{lwf_def}, it is easy to see that there is a \emph{unique} place labeled by ``$c$'' connecting only nonempty sets of transitions with complement labels ``$c!$'' and ``$c?$'' in an LGWF-net.
This place is called a channel.
Other unlabeled places can also connect transitions with complement labels.
For instance, in Fig.~\ref{lgwf_ex}, there is the unique place labeled by ``$h$'' with the single incoming arc from transition ``$h!$'' and the single outgoing arc to transition ``$h?$''. 
However, there is no place labeled by ``$f$'' in this LGWF-net, since there are no sending transitions labeled by ``$f!$''.

We also note that the number of transitions with ``$c!$'' label is not less than the number of transitions with ``$c?$'' label in any firing sequence of an LGWF-net, for any place labeled by $c$.
In other words, the number of times one can receive a message from a channel cannot be greater than the number of times a message has been sent to this channel.
This follows from the fact that transition $c?$ can fire only after transition $c!$ if the latter is present in an LGWF-net.
There will be a unique labeled place $c$ that is an input place to transition $c?$ and an output place to transition $c!$.

Let $N^- = (P, T, F, m_0, m_f)$ denote the \emph{underlying} GWF-net obtained from an LGWF-net $N = (P, T, F, m_0, m_f, h, \ell, k)$ by removing labels from transitions and places.
Correspondingly, an LGWF-net $N$ is sound if its underlying GWF-net $N^-$ is sound.

\subsection{AS-Composition of LGWF-Nets}

Here we define an \emph{AS-composition} of LGWF-nets.
It captures synchronous and asynchronous interactions among system components according to transition labels.
The AS-composition of LGWF-nets yields a complete model of a distributed system.

The AS-composition is defined for \emph{structurally disjoint} LGWF-nets. 
Intuitively, to compose LGWF-nets, it is necessary to:
\begin{itemize}
	\item add and connect labeled places with transitions having complement asynchronous labels;
	\item merge transitions with identical synchronous labels.
\end{itemize}

The formalization of the AS-composition is given in Definition \ref{comp} for the basic case of composing two LGWF-nets.
It is easy to see that both channel addition and transition synchronization do not lead to the violation of the structural requirements imposed by Definition \ref{GWF} for a GWF-net. 
Thus, in this definition, we explicitly construct an LGWF-net by the AS-composition.

\begin{definition}\label{comp}
	Let $N_i = (P_i, T_i, F_i, m_0^i, m_f^i, h_i, \ell_i, k_i)$ be an LGWF-net for $i=1,2$, such that $(P_1 \cup T_1) \cap (P_2 \cup T_2) = \varnothing$.
	Let $P_i^u = P_i \setminus \dom{k_i}$ and $T_i^a = T_i \setminus \dom{\ell_i}$ for $i=1, 2$.
	The AS-composition of $N_1$ and $N_2$, denoted $N_1\! \circledast \!N_2$, is the LGWF-net $(P, T, F, m_0, m_f, h, \ell, k)$, where:
	\begin{enumerate}
		\item $P = P_1^u \cup P_2^u \cup P_c$ where \\
		$\abs{P_c} = \abs{C},$ 
		$C= \{ c \in \mathbf{ch}(\rng{h}) \, \vert \, \exists t, t' \in T_1^a \cup T_2^a \colon \mathbf{ch}(h(t)) = c \land  h(t) = \overline{h(t')}\}$.\label{opluspl}
		\item $m_0 = m_0^1 \cup m_0^2$ and $m_f = m_f^1 \cup m_f^2$.\label{oplusinst}
		\item $T = T_1^a \cup T_2^a \cup T_{sync}$ where \\
		$T_{sync} = \{(t_1, t_2)\, \vert \, t_1 \in \dom{\ell_1}, t_2 \in \dom{\ell_2}, \ell_1(t_1) = \ell_2(t_2)\}$.
		\item $F$ is defined by the following four cases:
		\begin{enumerate}
			\item $\forall p \in P_i^u , \forall t \in T_i^a$ for $i=1, 2$ 
			\begin{itemize}
				\item $(p, t) \in F \Leftrightarrow (p,t) \in F_i$ and
				\item $(t, p) \in F \Leftrightarrow (t,p) \in F_i$.
			\end{itemize}  
			\item $\forall p \in P_1^u, \forall t = (t_1, t_2) \in T_{sync}$
			\begin{itemize}
				\item $(p, t) \in F \Leftrightarrow (p, t_1) \in F_1$ and
				\item $(t, p) \in F\Leftrightarrow (t_1, p) \in F_1$.
			\end{itemize}
			\item $\forall p \in P_2^u, \forall t = (t_1, t_2) \in T_{sync}$
			\begin{itemize}
				\item $(p, t) \in F \Leftrightarrow (p, t_2) \in F_2$ and 
				\item $(t, p) \in F\Leftrightarrow (t_2, p) \in F_2$.
			\end{itemize}
			\item $\forall p \in P_c , \forall t \in T_i^a$ for $i=1, 2$
			\begin{itemize}
				\item $(k(p) = c) \land (h_i(t) = c!) \Rightarrow (t,p) \in F$ and
				\item $(k(p) = c) \land (h_i(t) = c?) \Rightarrow (p,t) \in F$.
			\end{itemize}
		\end{enumerate}
		\item $h \colon T \nrightarrow \Lambda$, such that $\forall t \in T_{sync} \colon h(t) = \perp $ and $\forall t \in T_i^a \colon h(t) = h_i(t)$ for  $i=1, 2$.
		\item $\ell \colon T \nrightarrow S$, such that $\forall t = (t_1, t_2) \in T_{sync} \colon \ell(t) = \ell_1(t_1) = \ell_2(t_2)$ and \\
		$\forall t_i \in T_i^a \colon \ell(t_i) = \perp$ for $i=1, 2$.
		\item $k \colon P \nrightarrow C$, such that $k\vert_{P_c}$ is a bijection and $\forall p \notin P_c \colon k(p) = \perp$.\label{opluspllab}
	\end{enumerate}
\end{definition}

Consider the example shown in Fig.~\ref{comp_ex}.
We compose two LGWF-nets $N_1$ and $N_2$ shown in Fig.\,\ref{comp_ex1}.
They exchange messages via two channels $x$ and $y$. 
They also synchronize when transitions $b$ and $f$ fire.
This fact is given by the common synchronization label $s$ of transitions $b$ and $f$.
As a result, we need to introduce two labeled places $x$ and $y$ and connect them according to the sending/receiving labels of transitions.
In addition, we merge transitions $b$, $f$ obtaining a single transition $(b,f)$ in the AS-composition $N_1\! \circledast\! N_2$ shown in Fig.\,\ref{comp_ex2}.

Note also that after synchronizing transitions, the AS-composition can have places, whose neighborhoods coincide.
Such places \emph{can} be merged into a single one to make a net \emph{P-simple}.
Correspondingly, in Fig.\,\ref{comp_ex2}, we have merged the output places of the synchronized transitions $b$ and $f$.
\begin{figure}[!h]
	\centering
	\begin{subfigure}[b]{0.5\textwidth}
		\centering
		\includegraphics[height=4.5cm]{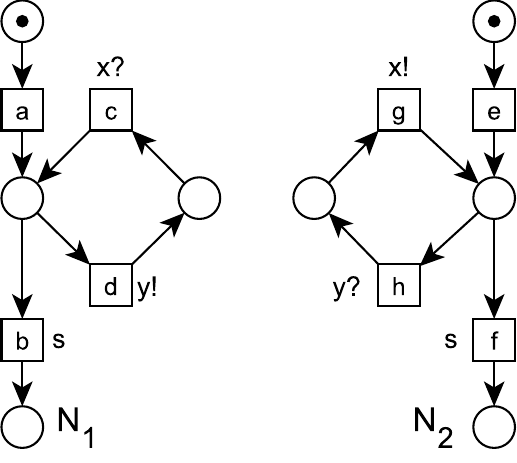}
		\caption{interacting components\label{comp_ex1}}
	\end{subfigure}%
	\begin{subfigure}[b]{0.5\textwidth}
		\centering
		\includegraphics[height=4.5cm]{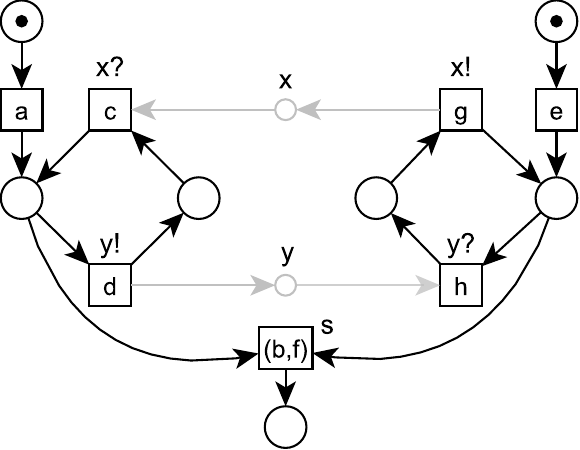}
		\caption{$N_1 \circledast N_2$\label{comp_ex2}}
	\end{subfigure}
	\caption{AS-composition of two LGWF-nets\label{comp_ex}}
\end{figure}

The AS-composition of LGWF-nets enjoys several properties that are easy to verify.
Firstly, it is both a commutative and associative operation.
These algebraic properties directly follow from the construction rules.
Thus, it can be generalized to the case of composing more than two LGWF-nets.

Secondly, a reachable marking in an AS-composition of LGWF-nets can be decomposed into three ``sub-markings'': the reachable markings of component LGWF-nets together with a marking of labeled places.
This follows from the fact that we can project the AS-composition firing sequences on the transitions in component LGWF-nets.
Then we obtain the corresponding firing sequences of components (see Proposition \ref{markdec}, where we formalize this property).

\begin{proposition}\label{markdec}
	Let $N_i = (P_i, T_i, F_i, m_0^i, m_f^i, h_i, \ell_i, k_i)$ be an LGWF-net for $i=1, 2$, and $N_1\!\circledast\!N_2 = (P, T, F, m_0, m_f, h, \ell, k)$ be the AS-composition of $N_1$ and $N_2$.
	Then $\forall m \in \reach{m_0} \colon m = (m_1 \setminus \dom{k_1}) \cup (m_2 \setminus \dom{k_2}) \cup m_c$, where $m_1 \in \reach{m_0^1}$, $m_2 \in \reach{m_0^2}$ and $m_c \subseteq \dom{k}$.
\end{proposition}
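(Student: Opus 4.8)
The plan is to induct on the length of a firing sequence of $N_1 \circledast N_2$, transporting it to each component by projection. Define two maps $\pi_1,\pi_2\colon T \to T_1 \cup T_2 \cup \{\epsilon\}$ by $\pi_1(t)=t,\ \pi_2(t)=\epsilon$ for $t\in T_1^a$; $\pi_1(t)=\epsilon,\ \pi_2(t)=t$ for $t\in T_2^a$; and $\pi_1((t_1,t_2))=t_1,\ \pi_2((t_1,t_2))=t_2$ for $(t_1,t_2)\in T_{sync}$, extended homomorphically to $T^*$. I would strengthen the claim to the invariant proved for every $w\in F\!S(N_1\circledast N_2)$ with $m_0\reach{w}m$: the projections satisfy $\pi_1(w)\in F\!S(N_1)$ and $\pi_2(w)\in F\!S(N_2)$, say $m_0^1\reach{\pi_1(w)}m_1$ and $m_0^2\reach{\pi_2(w)}m_2$, and moreover $m=(m_1\setminus\dom{k_1})\cup(m_2\setminus\dom{k_2})\cup m_c$ with $m_c=m\cap P_c$. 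The proposition is exactly the marking part of this invariant.

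For the base case $w=\epsilon$ we have $m=m_0=m_0^1\cup m_0^2$. Every channel place has a nonempty preset (Definition \ref{lwf_def}, item \ref{chanlab}), whereas $\pre{m_0^i}=\varnothing$ (Definition \ref{GWF}, item \ref{instwf}); hence no initial place carries a channel label, so $m_0^i\subseteq P_i^u$ and $m_0^i\setminus\dom{k_i}=m_0^i$. The invariant then holds with $m_1=m_0^1$, $m_2=m_0^2$ and $m_c=\varnothing$.

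For the inductive step write $w=w't$ with $m_0\reach{w'}m'\reach{t}m$ and assume the invariant for $w'$, so $m'=(m_1'\setminus\dom{k_1})\cup(m_2'\setminus\dom{k_2})\cup m_c'$ with $m_0^i\reach{\pi_i(w')}m_i'$. I then split on the three transition kinds of Definition \ref{comp}.3. If $t\in T_1^a$, the flow clauses 4(a) and 4(d) show that $\pre{t}$ and $\post{t}$ meet only $P_1^u$ and $P_c$: the $P_1^u$-part coincides with the neighbourhood of $t$ among the unlabeled places of $N_1$, while the $P_c$-part is a single channel place precisely when $h_1(t)$ is a send/receive action. Since $m'$ and $m_1'$ agree on $P_1^u$ and $N_1$ imposes no stronger precondition on the channel side, enabledness of $t$ at $m'$ forces enabledness of $t$ at $m_1'$; thus $\pi_1(w)=\pi_1(w')\,t\in F\!S(N_1)$ reaches $m_1=m_1'-\pre{t}+\post{t}$, the composition updates $P_1^u$ identically and changes only $m_c$ on $P_c$, and $m_2=m_2'$ is untouched, so the decomposition persists. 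The case $t\in T_2^a$ is symmetric. If $t=(t_1,t_2)\in T_{sync}$, clauses 4(b) and 4(c) give $\pre{t}\cup\post{t}\subseteq P_1^u\cup P_2^u$ with the $P_i^u$-part equal to the unlabeled neighbourhood of $t_i$ in $N_i$, and $h(t)=\perp$ means no channel place is involved; hence $t_1$ is enabled at $m_1'$ and $t_2$ at $m_2'$, both projected runs are extended, $P_c$ is unchanged, and the decomposition again holds.

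The step I expect to be the main obstacle is the channel-place accounting in the $T_1^a$/$T_2^a$ cases, i.e. certifying that a send/receive firing in the composition projects to an \emph{enabled} firing in the corresponding component even though that component lacks the shared place of $P_c$. Here the structure of Definition \ref{lwf_def}, item \ref{chanlab} is decisive: a component owns an internal channel place for $c$ only if it contains both a $c!$- and a $c?$-transition, so for a channel linking the two components each side carries no channel precondition inside its own net, and the place in $P_c$ behaves as a pure buffer whose contents are recorded by $m_c$. Together with the send-before-receive property noted after Definition \ref{lwf_def} (which keeps $m(p)\ge 0$ well-defined for every $p\in P_c$), this ensures that deleting the other component's transitions never breaks a projected firing, and the induction goes through.
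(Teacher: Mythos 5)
Your strategy --- project each firing sequence of $N_1 \circledast N_2$ onto the two components and induct on its length --- is exactly the argument the paper intends: the paper offers no detailed proof of Proposition \ref{markdec}, only the one-sentence remark that composition firing sequences project onto component firing sequences, and your proposal formalizes precisely that. Your base case, the $T_{sync}$ case, and the $T_i^a$ cases for transitions that are unlabeled or labeled $c!$ are sound: by Definition \ref{lwf_def}, a place labeled $c$ has only $c?$-transitions in its postset, so none of those transitions has a labeled place of its own net in its preset, and enabledness transfers as you claim.

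The gap sits precisely at the step you flagged as the main obstacle, and your resolution of it rests on a false dichotomy: that a channel is either internal to one component or links the two, but never both. Nothing in Definitions \ref{lwf_def} and \ref{comp} enforces this. Suppose $N_1$ contains a $c!$-transition $t_a$ and a $c?$-transition $t_b$; then $N_1$ owns an internal place $p_1$ labeled $c$ with $p_1 \in \pre{t_b}$, which the composition deletes (recall $P_1^u = P_1 \setminus \dom{k_1}$) and replaces by a fresh $p_c \in P_c$; if $N_2$ also contains a $c!$-transition $t_d$, then by Definition \ref{comp} the same $p_c$ is fed by $t_d$. Now $t_b$ can fire in $N_1 \circledast N_2$ on a token produced by $t_d$ while $p_1$ would be empty in $N_1$, so ``enabledness of $t$ at $m'$ forces enabledness of $t$ at $m_1'$'' fails. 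Concretely: let $N_1$ consist of parallel branches $s_1 \to t_a \to f_a$ and $s_b \to t_b \to f_b$ with $t_a \to p_1 \to t_b$, and let $N_2$ be $s_2 \to t_d \to f_2$; after firing $t_d t_b$ the composition reaches $\{s_1, f_b, f_2\}$, whose restriction $\{s_1, f_b\}$ to $P_1^u$ equals $m_1 \setminus \dom{k_1}$ for no $m_1 \in \reach{m_0^1}$. So in this mixed situation the proposition itself is false --- a defect of the paper's statement that its own sketch equally overlooks --- and any correct proof must exclude it, e.g.\ by assuming that each component uses each channel name only for sending or only for receiving (as in all of the paper's examples). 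Moreover, even under that assumption your induction does not close as written: when a channel $c$ is internal to $N_1$ and unused by $N_2$, firing a $c?$-transition in the projected run needs $p_1 \in m_1'$, and your invariant, which ties $m'$ to $m_1'$ only on unlabeled places, does not yield this. You must strengthen the invariant to record $m'(p_c) = m_1'(p_1)$ for every such internal channel; this follows because both sides equal the number of $c!$-firings minus the number of $c?$-firings in $w'$, all of which survive projection onto $T_1$.
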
   

However, the AS-composition of LGWF-nets studied above may not preserve behavioral and structural properties of component LGWF-nets.
For instance, if $N_1$ and $N_2$ are two sound LGWF-nets, their composition $N_1\! \circledast\! N_2$ might not be sound.
Consider the example provided in Fig.\,\ref{nsound}, where the system $N_1\! \circledast\! N_2$ is composed of two sound LGWF-nets.
$N_1\! \circledast\! N_2$ may reach a final marking $\{f_1, s_2\}$ different from the expected final marking $\{f_1, f_2\}$ if $N_1$ does not send a message to channel $d$.
This makes $N_1\! \circledast\! N_2$ lose soundness. 
Moreover, it is no longer covered by sequential components.
\begin{figure}[h]
	\centering
	\includegraphics[height=4.5cm]{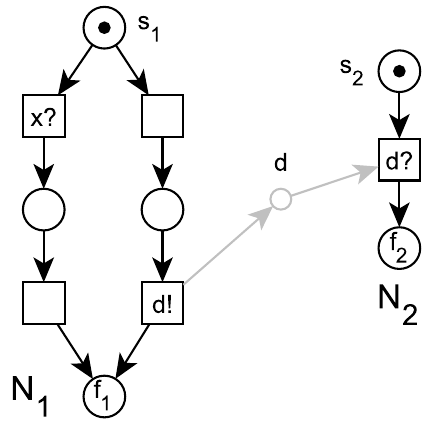}
	\caption{AS-composition may not preserve component properties}\label{nsound}
\end{figure}

The preservation of component properties in their AS-composition is the main problem we address in the paper.
For this purpose, instead of considering the AS-composition of LGWF-nets directly, we will analyze an underlying \emph{abstract interface}.
It is an LGWF-net that models how components interact.
Component models are mapped on corresponding subnets in an interface net via \emph{morphisms} discussed in the following section.
Then in Section \ref{sec:main}, we will apply these morphisms to achieve the preservation of component soundness in the AS-composition.

\section{Abstraction and Refinement in GWF-Nets Based on Morphisms}\label{sec:alpha}
This section describes a basic technique supporting abstraction and refinement in Petri nets based on $\alpha$-morphisms.
We study the properties of $\alpha$-morphisms relevant to GWF-nets.
These properties are further used to address the problem of preserving the soundness of LGWF-nets in their AS-composition.

\subsection{Place Refinement and $\alpha$-Morphisms}
The class of \emph{$\alpha$-morphisms} was introduced in \cite{Bernardinello2013} to support abstraction and refinement in Petri nets covered by sequential components.
An $\alpha$-morphism example is provided in Fig. \ref{alpha_ex}, where refinement of places is depicted by shaded subnets, i.e., the subnet $N_1(\omr{p_2})$ in $N_1$ refines the place $p_2$ in $N_2$. 
Refinement of transitions is explicitly given by their names, i.e., two transitions $f_1$ and $f_2$ in $N_1$ refine the same transition $f$ in $N_2$.
In other words, refinement may lead to splitting transitions of an abstract net.
After providing the formal definition of $\alpha$-morphisms, we also discuss the general intuition behind them.

\begin{definition}\label{alpham}
	Let $N_i = (P_i, T_i, F_i, m_0^i)$ be an SMD Petri net, $X_i = P_i \cup T_i$ for $i=1, 2$, where $X_1 \cap X_2 = \varnothing$. 
	An $\alpha$-morphism from $N_1$ to $N_2$ is a total surjective map $\varphi \colon X_1 \to X_2$, also denoted $\varphi \colon N_1 \to N_2$, where:
	\begin{enumerate}
		\item $\varphi(P_1) = P_2$.
		\item $\varphi(m_0^1) = m_0^2$.\label{instate}
		\item $\forall t_1 \in T_1 \colon$ if $\om{t_1} \in T_2$, then $\om{\pre{t_1}}=\pre{\om{t_1}}$ and $\om{\post{t_1}}=\om{t_1}^\bullet$.\label{tTOt}
		\item $\forall t_1 \in T_1 \colon$ if $\om{t_1} \in P_2$, then $\om{\neighb{t_1}}=\{\om{t_1}\}$.\label{tTOp}
		\item $\forall p_2 \in P_2 \colon$
		\begin{enumerate}
			\item $N_1(\varphi^{-1}(p_2))$ is an acyclic net or $\omr{p_2} \subseteq P_1$.\label{acycsub}
			\item $\forall p_1 \in \inp{N_1(\varphi^{-1}(p_2))} \colon \om{\pre{p_1}} \subseteq \pre{p_2}$ and if $\pre{p_2} \neq \varnothing$, then $\pre{p_1} \neq \varnothing$.\label{inP}
			\item $\forall p_1 \in \outp{N_1(\varphi^{-1}(p_2))} \colon \om{\post{p_1}} = \post{p_2}$.\label{outP}
			\item $\forall p_1 \in P_1 \cap \varphi^{-1}(p_2) \colon p_1 \notin \inp{N_1(\varphi^{-1}(p_2))} \Rightarrow \om{\pre{p_1}}=p_2 \text{ and }$\\ $p \notin \outp{N_1(\varphi^{-1}(p_2))} \Rightarrow \om{\post{p_1}} = p_2$.\label{INsub}
			\item $\forall p_1 \in P_1 \cap \varphi^{-1}(p_2)\colon$ there is a sequential component $N' = (P', T', F')$ in $N_1$, such that $p_1 \in P'$, $\varphi^{-1}(\neighb{p_2}) \subseteq T'$.\label{scomp}
		\end{enumerate}
	\end{enumerate}
\end{definition}

\begin{figure}[h]
	\centering
	\includegraphics[height=6cm]{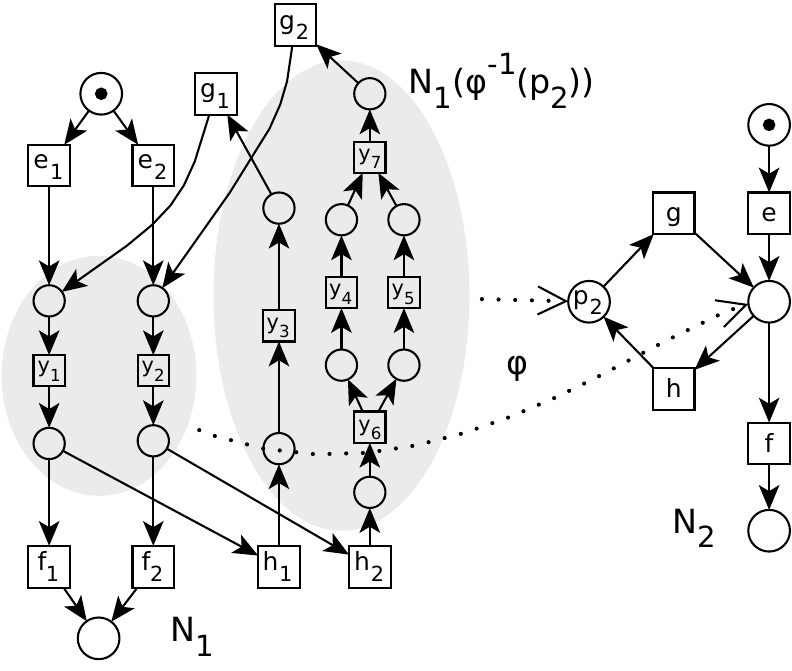}
	\caption{The $\alpha$-morphism $\varphi \colon N_1 \to N_2$}\label{alpha_ex}
\end{figure}

According to the definition, $\alpha$-morphisms allow us to refine places in $N_2$ by replacing them with \emph{acyclic} subnets in $N_1$, where $N_2$ is called an abstract net and $N_1$ is its refinement. 
If a transition in $N_1$ is mapped to a transition in $N_2$, then the neighborhood of the transition in $N_1$ should be mapped on the neighborhood of the corresponding transition in $N_2$
(by Definition \ref{alpham}.\ref{tTOt}). 
If a transition in $N_1$ is mapped to a place in $N_2$, then the neighborhood of this transition should be mapped to the same place (by Definition \ref{alpham}.\ref{tTOp}).

The main motivation behind $\alpha$-morphisms is the possibility to ensure that the behavioral properties of an abstract model hold in its refinement as well.
Therefore, each output place in a subnet should have the same choices as its abstraction does (by Definition \ref{alpham}.\ref{outP}).
Input places do not need this constraint (by Definition \ref{alpham}.\ref{inP}).
A choice between them is made before, since there are no concurrent transitions in the neighborhood of a subnet (by Definition \ref{alpham}.\ref{scomp}).
Moreover, by Definition \ref{alpham}.\ref{INsub}, neighborhoods of places internal to a subnet are mapped to the same place as the subnet. 

To sum up, requirements imposed by Definition \ref{alpham}.\ref{acycsub}-\ref{scomp} ensure the main intuition behind $\alpha$-morphisms. 
If a subnet in $N_1$ refines a place in $N_2$, then this subnet should behave ``in the same way'' as the place in $N_2$ does.
More precisely, let $N_1(\omr{p_2})$ be a subnet in $N_1$ refining a place $p_2$ in $N_2$. Then the following holds:
\begin{enumerate}
	\item No tokens are left in $N_1(\omr{p_2}) \cap P_1$ after firing an output transition in $\post{(\outp{N_1(\omr{p_2})})}$;
	\item No transitions are enabled in $\pre{(\inp{N_1(\omr{p_2})})}$ whenever there is a token in $N_1(\omr{p_2}) \cap P_1$.
\end{enumerate}

\subsection{Properties Preserved and Reflected by $\alpha$-Morphisms}\label{ssec:prop}
Here we study properties \emph{preserved} and \emph{reflected} by $\alpha$-morphisms (see Fig.~\ref{propscheme}).
In \cite{Bernardinello2013} several properties of $\alpha$-morphisms have already been studied.
We will refer to some of the proven properties here and consider other properties of $\alpha$-morphisms relevant to generalized workflow nets.

\begin{figure}[h]
	\centering
	\includegraphics[height=3cm]{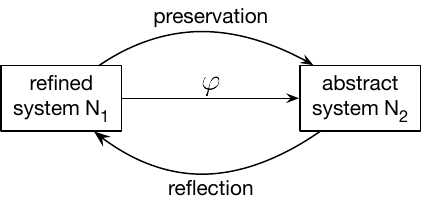}
	\caption{Relation between properties of an abstract system and its refinement}\label{propscheme}
\end{figure} 

The following proposition states that the structure of GWF-nets is preserved by $\alpha$-morphisms.

\begin{proposition}\label{strpr}
	Let $N_i=(P_i, T_i, F_i,$ $m_0^i)$ be an SMD Petri net, and $X_i = P_i \cup T_i$ for $i=1, 2$, such that there is an $\alpha$-morphism $\varphi \colon N_1 \to N_2$.
	If $N_1$ is a GWF-net, then $N_2$ is a GWF-net.
\end{proposition} 
\begin{proof}
	We show that $N_2$ satisfies the three structural conditions of GWF-nets, see Definition \ref{GWF}.
	
	\textbf{1.} By Definition \ref{alpham}.\ref{instate}, $\om{m_0^1} = m_0^2$.
	Suppose $\exists p_2 \in m_0^2 \colon \pre{p_2} \neq \varnothing$.
	By Definition \ref{alpham}.\ref{inP}, $\forall p_1 \in \inp{N_1(\varphi^{-1}(p_2))} \colon$ if $\pre{p_2} \neq \varnothing$, then $\pre{p_1} \neq \varnothing$.
	Take $p_1 \in m_0^1$, such that $\om{p_1} = p_2$. 
	Since $p_1 \in \inp{N_1(\varphi^{-1}(p_2))}$, then $\pre{p_1} \neq \varnothing$.
	By Definition \ref{GWF}.\ref{instwf}, $\forall p \in m_0^1 \colon \pre{p} = \varnothing$.
	Then, $\pre{p_2} = \varnothing$ and $\forall p \in m_0^2 \colon \pre{p} = \varnothing$.
	
	\textbf{2.} By Definition \ref{GWF}.\ref{outstwf}, $m_f^1 \subseteq P_1$, such that $\post{(m_f^1)} = \varnothing$.
	Denote $\om{m_f^1}$ by $m_f^2\!\subseteq\!P_2$.
	Suppose $\exists p_2 \in m_f^2 \colon \post{p_2} \neq \varnothing$.
	Take $p_1 \in m_f^1$, such that $\om{p_1} = p_2$.
	By Definition \ref{alpham}.\ref{outP}, $\forall p_1 \in \outp{N_1(\varphi^{-1}(p_2))} \colon \om{\post{p_1}} = \post{p_2}$.
	Since $p_1 \in \outp{N_1(\varphi^{-1}(p_2))}$, $\post{p_1} \neq \varnothing$.
	But by Definition \ref{GWF}.\ref{outstwf}, $p_1 \in m_f^1$ and $\post{p_1} = \varnothing$.
	Then, $\post{p_2} = \varnothing$ and $\forall p \in m_f^2 \colon \post{p} = \varnothing$.
	
	\textbf{3.} Suppose $\exists x_2 \in X_2$, such that $\forall p \in m_0^2 \colon (p, x_2) \notin F_2^*$.
	Since an $\alpha$-morphism is a surjective map, $\omr{x_2} \neq \varnothing$.
	Thus, $\varphi^{-1}(x_2) = \{x_1^1, \dots, x_1^k\} \subseteq X_1$, where $k \geq 1$.
	If $x_2 \in T_2$, then $\varphi^{-1}(x_2) \subseteq T_1$, and we take $x_1 \in \omr{x_2}$.
	If $x_2 \in P_2$, then we take $x_1 \in \inp{N_1(\omr{x_2})}$.
	By Definition \ref{GWF}.\ref{connwf}, $\exists s \in m_0^1 \colon (s,x_1) \in F_1^*$.
	Then, $\om{\pre{x_1}} \in \pre{x_2}$ or $\om{\pre{x_1}} = x_2$.
	We follow backward the whole path from $s$ to $x_1$ in $N_1$ mapping it on $N_2$ with $\varphi$.
	Thus, we obtain that $\exists x' \in X_2 \colon (x', x_2) \in F_2^*$ and $\om{s}=x'$, where $x' \in m_0^2$.
	
	Suppose $\exists x_2 \in X_2$, such that $\forall p \in m_f^2 \colon (x_2, p) \notin F_2^*$.
	Since an $\alpha$-morphism is a surjective map, $\omr{x_2} \neq \varnothing$.
	Thus, $\varphi^{-1}(x_2) = \{x_1^1, \dots, x_1^k\} \subseteq X_1$, where $k \geq 1$.
	If $x_2 \in T_2$, then $\varphi^{-1}(x_2) \subseteq T_1$, and we take $x_1 \in \omr{x_2}$.
	If $x_2 \in P_2$, then we take $x_1 \in \outp{N_1(\omr{x_2})}$.
	By Definition \ref{GWF}.\ref{connwf}, $\exists f \in m_f^1 \colon (x_1, f) \in F_1^*$.
	Then, $\om{\post{x_1}} \in \post{x_2}$ or $\om{\post{x_1}} = x_2$.
	We follow the whole path forward from $x_1$ to $f$ in $N_1$ mapping it on $N_2$ with $\varphi$.
	Thus, we obtain that $\exists x' \in X_2 \colon (x_2, x') \in F_2^*$ and $\om{f} = x'$, where $x' \in m_f^2$. \qed
\end{proof}

It follows from Proposition \ref{strpr} that $\om{m_f^1} = m_f^2$, i.e., final markings of GWF-nets are also preserved by $\alpha$-morphisms.
In the general case, the converse of Proposition \ref{strpr} is not valid. 
Indeed, $\alpha$-morphisms do not reflect the initial markings of GWF-nets properly (see Fig.\,\ref{gwf_nonrefl}).

A refined net $N_1$ is \emph{well marked} w.r.t. $\varphi$ if each input place in a subnet in $N_1$, refining a marked place in an abstract net $N_2$, is marked as well.
Consider again the $\alpha$-morphism shown in Fig. \ref{gwf_nonrefl}, the token in the shaded subnet must be placed into $p$ to make $N_1$ well marked w.r.t. to $\varphi$.

In the following proposition, we prove that $\alpha$-morphisms reflect the structure of GWF-nets under the well-markedness of $N_1$.

\begin{proposition}\label{strref}
	Let $N_i=(P_i, T_i, F_i,$ $m_0^i)$ be an SMD Petri net, and $X_i = P_i \cup T_i$ for $i=1, 2$, such that there is an $\alpha$-morphism $\varphi \colon N_1 \to N_2$.
	If $N_2$ is a GWF-net and $N_1$ is well marked w.r.t. $\varphi$, then $N_1$ is a GWF-net.
\end{proposition}
\begin{proof}
	We show that $N_1$ satisfies the three structural conditions of GWF-nets, see Definition \ref{GWF}.
	
	\textbf{1.} By Definition \ref{GWF}.\ref{instwf}, $\forall s_2 \in m_0^2 \colon \pre{s_2} = \varnothing$.
	Since $N_1$ is well-marked w.r.t. $\varphi$, $m_0^1 = \{ \inp{N_1(\omr{s_2})} \,\vert\, s_2 \in m_0^2 \}$.
	Take $s_2 \in m_0^2$ and the corresponding subnet $N_1(\omr{s_2})$.
	Suppose $\exists p \in \inp{N_1(\omr{s_2})}$, such that $\pre{p} \neq \varnothing$.
	Then $\om{p} = s_2$ and, by Definition \ref{alpham}.\ref{inP}, $\om{\pre{p}} \subseteq \pre{s_2} = \varnothing$ contradicting the total surjectivity of $\varphi$.
	
	\textbf{2.} By Definition \ref{GWF}.\ref{outstwf}, $\forall f_2 \in m_f^2 \colon \post{f_2} = \varnothing$.
	Take $f_2 \in m_f^2$ and the corresponding subnet $N_1(\omr{f_2})$.
	Also take $p \in \outp{N_1(\omr{f_2})}$.
	Then $\om{p} = f_2$.
	By Definition~\ref{alpham}.\ref{outP}, $\om{\post{p}} = \post{f_2} = \varnothing$ contradicting the total surjectivity of $\varphi$.
	Thus, we obtain the final marking of $N_1$, i.e., $m_f^1 = \{ \outp{N_1(\omr{f_2})} \,\vert\, f_2 \in m_f^2 \}$ and $\post{(m_f^1)} = \varnothing$.
	
	\textbf{3.} Suppose $\exists x_1 \in X_1$, such that $\forall s_1 \in m_0^1 \colon (s_1, x_1) \notin F_1^*$.
	If $(x_1, x_1) \notin F_1^*$, we follow the path from $x_1$ to the first node $x_1' \in X_1$ in $N_1$ backward, such that $\pre{x_1'} = \varnothing$.
	Since $\forall t_1 \in T_1 \colon \abs{\pre{t_1}} \geq 1$, $x_1' \in P_1$.
	If $x_1' \notin m_0^1$, then $N_1$ is not well-marked w.r.t. $\varphi$. 
	If $(x_1, x_1) \in F_1^*$, then, by Definition \ref{alpham}.\ref{acycsub}, there is a corresponding image cycle in $N_2$.
	Take $x_2 \in X_2$, such that $\om{x_1} = x_2$.
	By Definition \ref{GWF}.\ref{outstwf}, $\exists s_2 \in m_0^2 \colon (s_2, x_2) \in F_2^*$.
	Take $x_2' \in X_2$ belonging to this cycle, where at least one node in $\pre{x_2'}$ is not in the cycle.
	By surjectivity of $\varphi$, $\exists x_1' \in X_1 \colon \om{x_1'} = x_2'$ belonging to the cycle $(x_1, x_1) \in F_1^*$.
	If $x_2' \in T_2$, then $\omr{x_2'} \subseteq T_1$.
	By Definition \ref{alpham}.\ref{tTOt}, the neighborhood of transitions is preserved by $\varphi$.
	Then, $\forall t_1 \in \omr{x_2'} \colon \om{\pre{t_1}} = \pre{x_2'}$, i.e., there is a place in $\pre{\omr{x_2'}}$ which does not belong to the cycle $(x_1, x_1) \in F_1^*$.
	If $x_2' \in P_2$, then take $\inp{N_1(\omr{x_2'})}$.
	At least one place in $\inp{N_1(\omr{x_2'})}$ has an input transition which does not belong to the cycle $(x_1, x_1) \in F_1^*$, since there is a node in $\pre{x_2'}$ which is not in the image cycle in $N_2$.
	We have shown that $\exists x \in \pre{x_1'}$, such that $x$ does not belong to the cycle $(x_1, x_1) \in F_1^*$.
	Thus, either there is a path from $\widetilde{x}$ to $x$ with $\pre{\widetilde{x}} = \varnothing$, or there is another cycle $(\widetilde{x}, \widetilde{x}) \in F_1^*$.
	
	Applying a similar reasoning, we prove that $\forall x_1 \in X_1 \, \exists f_1 \in m_f^1 \colon (x_1, f_1) \in F_1^*$.
	The only difference is that we follow paths forward. \qed
\end{proof}

\begin{figure}[h]
	\centering
	\begin{subfigure}[b]{0.5\textwidth}
		\centering
		\includegraphics[height=4.5cm]{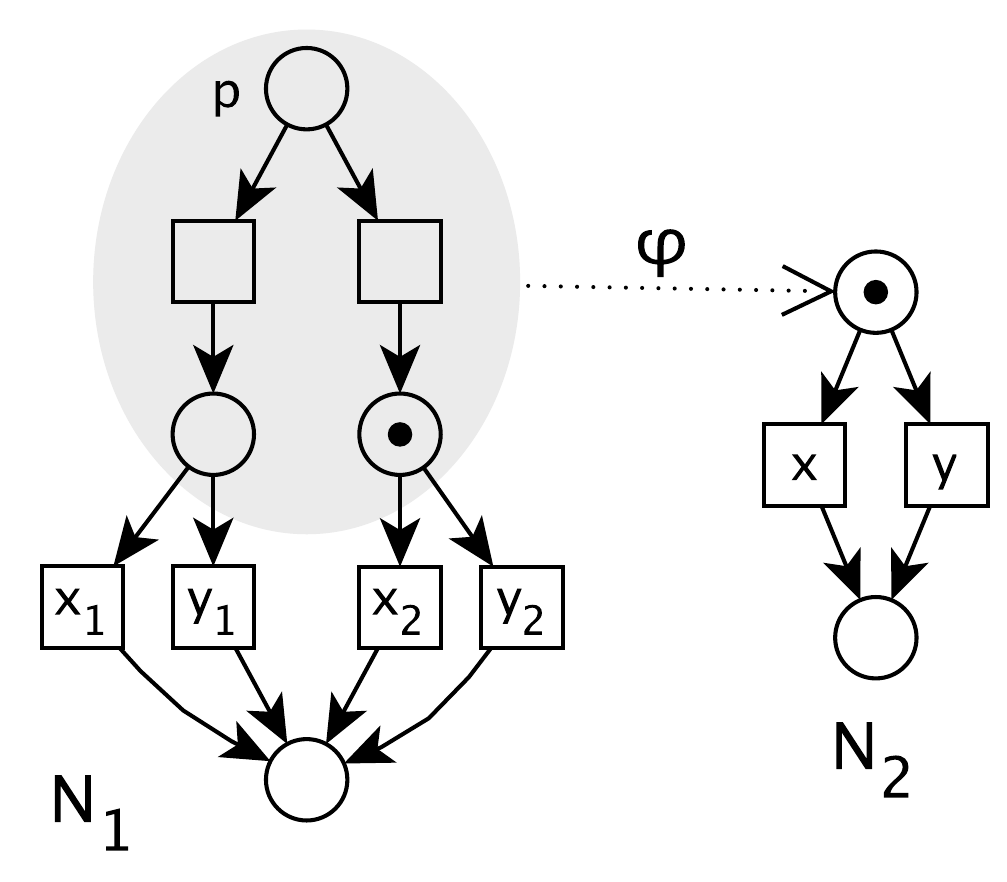}
		\caption{\label{gwf_nonrefl}}
	\end{subfigure}%
	\begin{subfigure}[b]{0.5\textwidth}
		\centering
		\includegraphics[height=4.5cm]{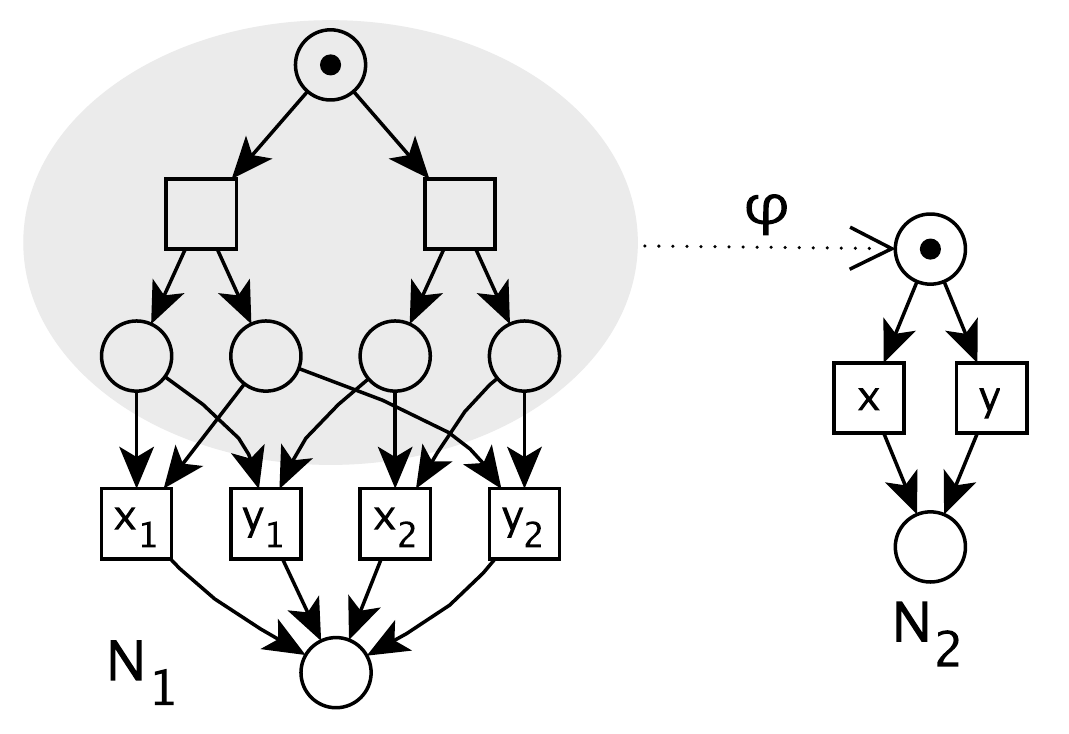}
		\caption{\label{sound_nonrefl}}
	\end{subfigure}
	\caption{Two $\alpha$-morphisms with the same range}
\end{figure}

In Propositions \ref{strpr} and \ref{strref}, we have proven two structural properties of $\alpha$-morphisms.
Further, we study preservation and reflection of behavioral properties, i.e., whether reachable markings are preserved and reflected by $\alpha$-morphisms.

According to the following proposition, it was proven that $\alpha$-morphisms preserve reachable markings and transition firings.

\begin{proposition}[see \cite{Bernardinello2013}]
	Let $N_i=(P_i, T_i, F_i,$ $m_0^i)$ be an SMD Petri net, and $X_i = P_i \cup T_i$ for $i=1, 2$, such that there is an $\alpha$-morphism $\varphi \colon N_1 \to N_2$.
	Let $m_1 \in \reach{m_0^1}$. Then $\om{m_1} \in \reach{m_0^2}$. 
	If $m_1 \reach{t}m_1'$, where $t \in T_1$, then:
	\begin{enumerate}
		\item $\om{t} \in T_2 \Rightarrow \om{m_1}\reach{\om{t}}\om{m_1'}$.
		\item $\om{t} \in P_2 \Rightarrow \om{m_1} = \om{m_1'}$.
	\end{enumerate}\label{markpres}
\end{proposition}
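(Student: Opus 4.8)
The plan is to establish the two single-step correspondences first and then derive reachability preservation by induction on the length of a firing sequence $w$ with $m_0^1\reach{w}m_1$. The base case is immediate: by Definition \ref{alpham}.\ref{instate}, $\om{m_0^1}=m_0^2\in\reach{m_0^2}$. For the inductive step consider $m_1\reach{t}m_1'$ and split on whether $\om{t}$ is a transition or a place of $N_2$. When $\om{t}\in T_2$, enabledness transfers cheaply: from $\pre{t}\subseteq m_1$ we get $\om{\pre{t}}\subseteq\om{m_1}$, and Definition \ref{alpham}.\ref{tTOt} identifies $\om{\pre{t}}=\pre{\om{t}}$, so $\pre{\om{t}}\subseteq\om{m_1}$ and $\om{t}$ is enabled at $\om{m_1}$.

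It then remains to check that firing produces the right successor, namely $\om{m_1'}=\om{m_1}-\pre{\om{t}}+\post{\om{t}}$. Since $m_1'=m_1-\pre{t}+\post{t}$ and Definition \ref{alpham}.\ref{tTOt} also gives $\om{\post{t}}=\post{\om{t}}$, this reduces to a token-counting argument on the images. For the complementary case $\om{t}\in P_2$, Definition \ref{alpham}.\ref{tTOp} yields $\om{\neighb{t}}=\{\om{t}\}$, so $\pre{t}$ and $\post{t}$ both lie inside the single subnet $N_1(\omr{\om{t}})$ and map entirely to $\om{t}$. Firing $t$ merely moves tokens within this subnet, so no image occupation outside $\om{t}$ changes; arguing that the subnet still holds a token after the firing gives $\om{m_1}=\om{m_1'}$.

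Reachability preservation then follows by transitivity of $\reach{\cdot}$: along $m_0^1\reach{w}m_1$, each fired transition either maps into $T_2$ and advances $\om{\cdot}$ by a genuine firing of $\om{t}$, or maps into $P_2$ and leaves $\om{\cdot}$ fixed. Concatenating the former and discarding the latter, the projected sequence $\om{w}\vert_{T_2}$ witnesses $\om{m_1}\in\reach{m_0^2}$.

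The hard part will be the token bookkeeping underlying both firing computations: ensuring that $\om{m_1}$ is a genuine safe marking of $N_2$ and that removing $\pre{t}$ in $N_1$ corresponds exactly to removing $\pre{\om{t}}$ in $N_2$. The danger is that a reachable $m_1$ might deposit more than one token into a single subnet $N_1(\omr{p_2})$, so that $\om{m_1}$ would fail to be a set, or that tokens on internal places of the subnet would be miscounted. I expect to control this by carrying the invariant ``each subnet $N_1(\omr{p_2})$ holds at most one token'' through the induction alongside the correspondences. This invariant rests on the structural constraints of $\alpha$-morphisms: the sequential-component covering of Definition \ref{alpham}.\ref{scomp} confines each refining subnet to one sequential component, hence to a single token, while Definition \ref{alpham}.\ref{acycsub}, \ref{outP} and \ref{INsub} ensure that this token transits the acyclic subnet as a single abstract token and that the subnet empties precisely when an output transition fires. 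This is the delicate content that makes the image bookkeeping go through, and it is essentially the part already established in \citep{Bernardinello2013}.
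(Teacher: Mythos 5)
First, a point of calibration: the paper does not prove Proposition \ref{markpres} at all --- it is imported from \citep{Bernardinello2013} without proof --- so your attempt can only be measured against the standard argument from that reference. Your skeleton is indeed that standard argument: induction along a firing sequence, base case from Definition \ref{alpham}.\ref{instate}, a case split on $\om{t}\in T_2$ versus $\om{t}\in P_2$ via Definition \ref{alpham}.\ref{tTOt} and Definition \ref{alpham}.\ref{tTOp}, and projection of the fired sequence onto the genuine $T_2$-firings. The $\om{t}\in P_2$ case is handled correctly: $\post{t}\neq\varnothing$ and it maps to $\om{t}$, so the image retains a token on $\om{t}$ and no image place outside $\om{t}$ is touched.

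However, the invariant on which you rest the hard case is false, and its justification misreads Definition \ref{alpham}.\ref{scomp}. That condition does \emph{not} confine a refining subnet to one sequential component: it states that \emph{each place} $p_1\in\omr{p_2}$ lies on \emph{some} sequential component containing $\varphi^{-1}(\neighb{p_2})$, and different places of the subnet may use different components. A subnet $N_1(\omr{p_2})$ may contain an internal fork --- a transition mapped to $p_2$ (legal by Definition \ref{alpham}.\ref{tTOp}) whose two output places, both in $\omr{p_2}$, become concurrently marked; each of these places separately satisfies Definition \ref{alpham}.\ref{scomp} through its own sequential component, the subnet is acyclic as required by Definition \ref{alpham}.\ref{acycsub}, and yet it holds two tokens. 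So ``each subnet holds at most one token'' cannot be carried as an induction invariant: it is simply not true (though the proposition itself is unharmed, since $\om{m_1}$ is a set image and both tokens map to the same $p_2$). The bookkeeping for $\om{t}\in T_2$ must instead be localized at the instant of firing: for each $p_2\in\pre{\om{t}}$ you need that the marked places of $\omr{p_2}$ in $m_1$ are exactly those in $\pre{t}$, and this follows from Definition \ref{alpham}.\ref{scomp} \emph{combined with the enabledness of} $t$. Indeed, any marked $q\in\omr{p_2}$ lies on a sequential component whose transition set contains $t$; that component carries exactly one token and contains exactly one input place of $t$, which is marked because $t$ is enabled; hence $q$ coincides with that input place, i.e. $q\in\pre{t}$. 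With this replacement your induction goes through; as written, however, the ``delicate content'' of your proof rests on a claim that a simple fork inside a refined place already refutes.
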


In the general case, $\alpha$-morphisms do not reflect both reachable markings and transition firings.
More precisely, having a reachable marking $m_2 \in \reach{m_0^2}$, such that $m_2\reach{t_2}$ for a transition $t_2$ in $N_2$, we cannot say that $\forall t_1 \in \omr{t_2}\,  \exists m_1 = \omr{m_2} \in \reach{m_0^1} \colon m_1 \reach{t_1}$ in $N_1$.

Note that the reflection of reachable markings is a crucial property, since we seek to deduce the behavioral properties of a refined system from those of its abstraction.
It is necessary to check additional \emph{local} conditions based on the unfolding to achieve the reflection of reachable markings.
We briefly describe this technique first introduced in \cite{Bernardinello2013}.

Let $N_1$ and $N_2$ be two Petri nets related via the $\alpha$-morphism $\varphi: N_1 \to N_2$, 
Recall that $N_1$ is called a refinement, and $N_2$ is called an abstraction of $N_1$.
For every place $p_2$ in $N_2$, refined by a subnet in $N_1$, we construct a \emph{local} net, denoted by $S_2(p_2)$, by taking the neighborhood transitions of $p_2$ with artificial input and output places if necessary.
The same is done for the refined system $N_1$. 
We construct the corresponding local net, denoted by $S_1(p_2)$, by taking the subnet in $N_1$ refining $p_2$ via $\varphi$, i.e., $N_1(\omr{p_2})$ and the transitions $\varphi^{-1} (\pre{p_2}) \cup \varphi^{-1} (\post{p_2})$ with artificial input and output places if necessary.
As a result, we have two local nets $S_1(p_2)$ and $S_2(p_2)$.

Since there is the $\alpha$-morphism $\varphi: N_1 \to N_2$, there is also the $\alpha$-morphism $\varphi^S \colon S_1(p_2) \to S_2(p_2)$ corresponding to the restriction of $\varphi$ to the places and transitions in $S_1(p_2)$.
Recall that the unfolding of a Petri net $N$, denoted by $\mathcal{U}(N)$, is the maximal branching process of $N$, such that any other branching process is isomorphic to a subnet in $\mathcal{U}(N)$. The nodes in $\mathcal{U}(N)$ are mapped to the nodes in $N$ via the \emph{folding} function $u$.
In Lemma \ref{unf}, taking the unfolding of $S_1(p_2)$, we prove that the associated folding function $u$ composed with the $\alpha$-morphisms $\varphi^S$ is also the $\alpha$-morphism under the soundness of $N_1$.
Note that since $S_1(p_2)$ is acyclic (by Definition \ref{alpham}.\ref{acycsub}), its unfolding is finite.
This helps us to assure that the ``final'' marking in a subnet in $N_1$, refining a place $p_2$ in the abstract model $N_2$, enables exactly the inverse image of transitions in $\post{p_2}$.
After providing Lemma \ref{unf}, we also discuss a specific example of checking these local conditions.

\begin{lemma}\label{unf}
	Let $N_i=(P_i, T_i, F_i,$ $m_0^i)$ be an SMD Petri net, and $X_i = P_i \cup T_i$ for $i=1, 2$, such that there is an $\alpha$-morphism $\varphi \colon N_1 \to N_2$.
	Let $\mathcal{U}(S_1(p_2))$ be the unfolding of $S_1(p_2)$ with the folding function $u$, and $\varphi^S$ be an $\alpha$-morphism from $S_1(p_2)$ to $S_2(p_2)$, where $p_2 \in P_2$.
	Let $N_1$ be a sound GWF-net. Then, the map from $\mathcal{U}(S_1(p_2))$ to $S_2(p_2)$ obtained as $\varphi^S \circ u$ is an $\alpha$-morphism.
\end{lemma}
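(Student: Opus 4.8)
The plan is to verify directly that the composite map $\psi = \varphi^S \circ u$ satisfies every clause of Definition \ref{alpham}, reducing each clause to the corresponding property of the folding $u$ and of the given $\alpha$-morphism $\varphi^S$. First I would pin down the domain and codomain. The net $\unf{S_1(p_2)}$ is an occurrence net, hence acyclic, and it is finite because $S_1(p_2)$ is acyclic by Definition \ref{alpham}.\ref{acycsub}; as the unfolding of a safe net it is safe, and its state-machine decomposability is inherited from that of $S_1(p_2)$ by lifting each sequential component along $u$ (every maximal conflict-free line projecting into a component is a sequential component of the unfolding). Since $S_2(p_2)$ is a small safe SMD net by construction, $\psi$ is a map between nets of the required class. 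Moreover $u$ sends conditions to places and, by Definition \ref{alpham}.\ref{instate} applied to $\varphi^S$, $\varphi^S$ sends places to places; hence $\psi$ sends conditions of $\unf{S_1(p_2)}$ to places of $S_2(p_2)$, which, together with surjectivity, establishes clause \textbf{1}.

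For the remaining ``global'' clauses I would compose the two layers. The initial-marking clause (Definition \ref{alpham}.\ref{instate}) follows because $u$ restricted to $M\!in$ is a bijection onto the initial cut of $S_1(p_2)$, and $\varphi^S$ maps that cut onto the initial marking of $S_2(p_2)$. The transition clauses (Definition \ref{alpham}.\ref{tTOt} and \ref{alpham}.\ref{tTOp}) follow because the folding preserves pre- and post-sets bijectively, so for any event $e$ the neighborhood $\neighb{e}$ is mapped by $u$ exactly onto $\neighb{u(e)}$; the corresponding clause of $\varphi^S$ then transfers the equality to $S_2(p_2)$, and the case split on whether $\psi(e)$ is a transition or a place is inherited from $\varphi^S$. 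Surjectivity onto all of $S_2(p_2)$ is where soundness of $N_1$ enters for the first time: by Definition \ref{sound}.\ref{sndlive} every transition of $N_1$, and hence every node of the local net $S_1(p_2)$, is reachable and therefore occurs in the unfolding, so $u$ is onto $S_1(p_2)$ and $\psi$ is onto $S_2(p_2)$.

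The real work is Definition \ref{alpham}.\ref{acycsub}--\ref{scomp}, the place-refinement conditions, applied to each place $q_2$ of $S_2(p_2)$. For the artificial input/output places the preimages are trivial, so the only substantial case is the refined place $p_2$, whose preimage $\psi^{-1}(p_2)$ is the (sub-)unfolding of $N_1(\omr{p_2})$. Condition \ref{alpham}.\ref{acycsub} is immediate since the unfolding is acyclic. Conditions \ref{alpham}.\ref{inP}, \ref{alpham}.\ref{INsub} and \ref{alpham}.\ref{scomp} I would obtain by transporting the corresponding properties of $\varphi^S$ through the folding, using that $u$ maps input (respectively output, respectively internal) conditions of $\psi^{-1}(p_2)$ to input (respectively output, respectively internal) places of $N_1(\omr{p_2})$, and that $u$ reflects sequential components.

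The hard part will be condition \ref{alpham}.\ref{outP}, namely $\psi(\post{b}) = \post{p_2}$ for every output condition $b$ of $\psi^{-1}(p_2)$, and this is exactly where the soundness of $N_1$ is indispensable. I would argue that the subnet $N_1(\omr{p_2})$, read as a small workflow whose entry marking is $\inp{N_1(\omr{p_2})}$ and whose exit marking is $\outp{N_1(\omr{p_2})}$, inherits clean termination from the soundness of $N_1$: by Definition \ref{sound} no run can leave residual tokens inside $N_1(\omr{p_2}) \cap P_1$ once a transition of $\post{(\outp{N_1(\omr{p_2})})}$ has fired, and no such transition is enabled while an internal token remains. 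Lifting these two local facts to the finite unfolding shows that the maximal cut of $\unf{S_1(p_2)}$ consists precisely of the output conditions and that they jointly enable exactly the events mapping to $\post{p_2}$; hence each output condition offers, through $\psi$, the full choice $\post{p_2}$, which is Definition \ref{alpham}.\ref{outP}. Without soundness the subnet could terminate with leftover tokens or offer only a partial choice, so isolating and proving this clean-termination behaviour of the unfolded subnet is the crux of the argument.
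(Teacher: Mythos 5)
Your overall skeleton---verify the clauses of Definition \ref{alpham} for $\varphi^S \circ u$, with soundness carrying surjectivity and the output-choice clause---matches the intent of the paper's proof, but the two steps where the real mathematical content lies both have gaps. The crux step for clause \ref{alpham}.\ref{outP} does not go through as written: the two ``local facts'' you derive from soundness of $N_1$ (firing a transition in $\post{(\outp{N_1(\omr{p_2})})}$ empties the subnet; no such transition is enabled while an internal token remains) are not consequences of soundness at all. They are structural consequences of the $\alpha$-morphism conditions---this is exactly Lemma 1 of \citep{Bernardinello2013}, which the paper invokes---and they hold equally in the paper's negative example of Fig.~\ref{locunf}, where $N_1$ is \emph{not} sound. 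In that example the final cuts of the unfolded subnet do consist of output conditions, yet they enable only events mapping to $x$ and never to $y$, so the composite map violates clause \ref{alpham}.\ref{outP}. Hence ``lifting these two local facts'' cannot yield the conclusion that every output condition offers the full choice $\post{p_2}$; that ``onto'' half is precisely what soundness has to supply, and your proposal asserts it rather than proves it.

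The second gap is the bridge between $N_1$ and the local net. Both in your surjectivity step and in the crux step you transfer behavioral facts about $N_1$ (firability of transitions, clean termination) directly to the unfolding of the standalone net $S_1(p_2)$. These are different nets with different token games: $S_1(p_2)$ runs from its own artificial initial marking, so the fact that a transition can fire at some reachable marking of $N_1$ does not by itself give an occurrence of that transition in $\unf{S_1(p_2)}$. The paper closes exactly this gap by an intermediate statement your proposal lacks: it first proves that $S_1(p_2)$ is itself a sound GWF-net, combining soundness of $N_1$ with the emptying property of Lemma 1 of \citep{Bernardinello2013}. Only then does Definition \ref{sound}.\ref{sndlive}, applied to $S_1(p_2)$ rather than to $N_1$, guarantee that every transition of $S_1(p_2)$ occurs in its (finite) unfolding, which makes $u$ surjective and lets the behavioral clauses for $\varphi^S \circ u$, in particular \ref{alpham}.\ref{outP}, be derived from local soundness. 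Supplying this intermediate result---local soundness of $S_1(p_2)$---and re-deriving your surjectivity and output-choice claims from it is what is needed to repair the argument.
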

\begin{proof}
	Since $N_1$  is a GWF-net, $S_1(p_2)$ is also a GWF-net.
	By Lemma 1 of \cite{Bernardinello2013}, when a transition in $\omr{\post{p_2}}$ fires, it empties the subnet $N_1(\omr{p_2})$.
	Then $S_1(p_2)$ is sound, and, by Def.\ref{GWF}.\ref{sndlive}, each transition in $S_1(p_2)$ will occur at least once.
	Thus, the folding $u$ is a surjective function from $\unf{S_1(p_2)}$ to $S_1(p_2)$ and the composition $\varphi^S \circ u$ is the $\alpha$-morphism from $\unf{S_1(p_2)}$ to $S_2(p_2)$. \qed
\end{proof}

Figure \ref{locunf} provides a negative example of checking the local unfolding condition when $N_1$ is not sound.
We use the $\alpha$-morphism previously shown in Fig.\,\ref{sound_nonrefl}.
In this case, local nets coincides with the original $N_1$ and $N_2$.
When we unfold $N_1$, there are no occurrences of transitions $y_1$ and $y_2$. 
Thus, the composition of the corresponding folding function and the $\alpha$-morphism $\varphi \circ u$ is not an $\alpha$-morphism.
The ``final'' marking in the subnet $N_1(\omr{p_2})$, which refines $p_2$ in $N_2$, enables transitions $x_1$ and $x_2$ only, whereas, in $N_2$, transition $y$ is also enabled.
Therefore, transitions in the inverse image of $y$ in $N_2$ cannot be enabled by the final marking in the subnet $N_1(\omr{p_2})$.
\begin{figure}[h]
	\centering
	\includegraphics[height=4.5cm]{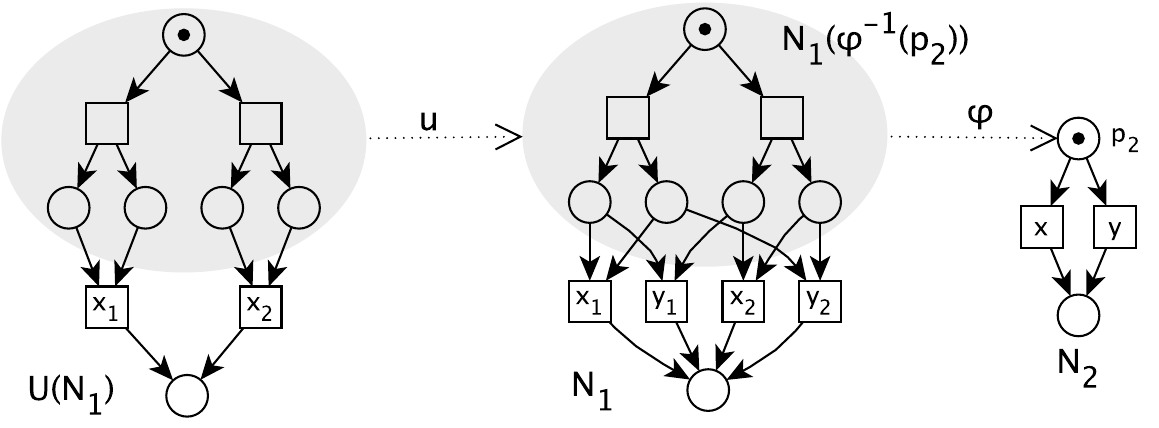}
	\caption{A negative example of checking local conditions based on the unfolding}\label{locunf}
\end{figure}

One should check the unfolding condition for all properly refined places in an abstract net.
A properly refined place is a place that is refined by a subnet rather than by a set of places.
Taking the above discussion into account, we obtain that $\alpha$-morphisms reflect reachable markings and transition firings under the soundness of a refinement (see Proposition \ref{markrefl}).

\begin{proposition}\label{markrefl}
	Let $N_i=(P_i, T_i, F_i,$ $m_0^i)$ be an SMD Petri net, and $X_i = P_i \cup T_i$ for $i=1, 2$, such that there is an $\alpha$-morphism $\varphi \colon N_1 \to N_2$.
	Let $N_1$ be a sound GWF-net.
	Then $\forall m_2 \in \reach{m_0^2} \, \exists m_1 \in \reach{m_0^1} \colon$ $ \varphi(m_1) = m_2$.
	If $m_2\reach{t_2}$, then $\forall t \in \varphi^{-1} (t_2) \, \exists m_1 = \varphi^{-1}(m_2) \in \reach{m_0^2} \colon $ $m_1\reach{t_1}$.
\end{proposition}

\begin{proof}
	Follows from Lemma \ref{unf}. \qed
\end{proof}

The following theorem expresses the main result of this section. 
We prove that $\alpha$-morphisms preserve soundness of GWF-nets.

\begin{theorem}\label{spres}
	Let $N_i=(P_i, T_i, F_i,$ $m_0^i)$ be an SMD Petri net, and $X_i = P_i \cup T_i$ for $i=1, 2$, such that there is an $\alpha$-morphism $\varphi \colon N_1 \to N_2$.
	If $N_1$ is a sound GWF-net, then $N_2$ is a sound GWF-net.
\end{theorem}
\begin{proof}
	We show that $N_2$ satisfies the three behavioral conditions of a sound GWF-net, see Definition \ref{sound}.
	
	\textbf{1.} By Definition \ref{sound}.\ref{sndprop}, for all $m_1 \in \reach{m_0^1} \colon m_f^1 \in \reach{m_1}$.
	Then, $\exists w \in FS(N_1) \colon$ $m_1\reach{w}m_f^1$. i.e $w=t_1t_2\dots t_n$ and $m_1\reach{t_1}m_1^1\dots m_1^{n-1}\reach{t_n}m_f^1$.
	Using Proposition \ref{markpres}, it is possible to simulate $w$ in $N_2$. 
	By Proposition \ref{strpr}, $\om{m_f^1} = m_f^2$.
	Suppose $\exists m_2 \in \reach{m_0^2} \colon m_f^2 \notin \reach{m_2}$.
	By Proposition \ref{markrefl}, $\exists m_1' \in \reach{m_0^1} \colon \omr{m_2} = m_1'$.
	By Definition \ref{sound}.\ref{sndprop}, $m_f^1 \in \reach{m_1'}$.
	Thus, $\exists w' \in FS(N_1) \colon$ $m_1'\reach{w'}m_f^1$.
	Using Proposition \ref{markpres}, it is again possible to simulate $w'$ in $N_2$.
	Then, $m_f^2 \in \reach{m_2}$.
	
	\textbf{2.} Suppose $\exists m_2' \in \reach{m_0^2} \colon m_2' \supseteq m_f^2$.
	Then $m_2' = m_f^2 \cup P_2'$, where $P_2' \cap m_f^2 = \varnothing$.
	By Proposition \ref{markrefl}, take $m_1 \in \reach{m_0^1}$, such that $\omr{m_1} = m_2'$ and $m_f^1 \nsubseteq m_1$.
	By Definition \ref{sound}.\ref{sndprop}, $m_f^1 \in \reach{m_1}$ and $\exists w \in FS(N_1) \colon m_1 \reach{w} m_f^1$.
	Using Proposition \ref{markpres}, it is possible to simulate $w$ in $N_2$.
	By Proposition \ref{strpr}, $\om{m_f^1} = m_f^2$.
	The only way to completely empty places in $P_2'$ is to consume at least one token from $m_f^2$.
	Then, $\exists f_2 \in m_f^2 \colon \post{f_2} \neq \varnothing$ which contradicts Definition \ref{GWF}.\ref{outstwf}.
	
	\textbf{3.} By Definition \ref{sound}.\ref{sndlive}, $\forall t_1 \in T_1 \, \exists m_1 \in \reach{m_0^1} \colon m_1 \reach{t_1}$.
	Sincs $\varphi$ is a surjective map, $\forall t_2 \in T_2 \, \exists t_1 \in T_1 \colon \om{t_1} = t_2$.
	By Proposition \ref{markpres}, $m_1 \reach{t_1} m_1' \Rightarrow \om{m_1}\reach{\om{t_1}}\om{m_1'}$.
	Then, $\forall t_2 \in T_2 \, \exists m_2 \in \reach{m_0^2} \colon m_2 \reach{t_2}$. \qed
\end{proof}

However, the converse of Theorem \ref{spres} is not true in general. 
Consider again the example shown in Fig.\,\ref{sound_nonrefl}, where $N_2$ is sound and $N_1$ is not sound, since transitions $y_1$ and $y_2$ cannot fire.
Thus, $\alpha$-morphisms do not reflect soundness, which follows from the fact that reachable markings are not reflected in the general case.

In our study, soundness reflection is a sought property of $\alpha$-morphisms.
We apply $\alpha$-morphisms to achieve the preservation of LGWF-net soundness in their AS-composition.
Component interactions are encoded in an \emph{abstract interface}, which also represents the abstract view of a complete system.
In the next section, we provide a technique when the soundness of an abstract interface implies the soundness of a refined system model, i.e., the associated $\alpha$-morphism reflects the soundness of an abstract interface.

\section{Preserving Soundness in the AS-Composition via Morphisms}\label{sec:main}
The AS-composition of LGWF-nets, discussed in Section \ref{sec:comp}, preserves the soundness of LGWF-nets through the use of an \emph{abstract interface}.
This model provides minimal detail on the local behavior of communicating components, focusing on their synchronous and asynchronous interactions.
An abstract interface is also referred to as an \emph{interface pattern} that is the AS-composition of corresponding abstract LGWF-nets.
Abstraction of components is implemented using $\alpha$-morphisms, discussed in the previous section, that we adjust to LGWF-nets.
We aim to deduce the soundness of a refined system model by verifying the soundness of an underlying interface pattern.

If $N_1$ and $N_2$ are two LGWF-nets, then an $\alpha$-morphism $\varphi \colon N_1 \to N_2$ should additionally respect transition labeling, i.e., a labeled transition in $N_1$ can only be mapped to a transition in $N_2$ with the same label.
Then a labeled transition in $N_1$ cannot be mapped to a place in $N_2$.
We formalize these restrictions on labeled transition mapping in the following definition.

\begin{definition}\label{alphahm}
	Let $N_i = (P_i, T_i, F_i, m_0^i, m_f^i, h_i, \ell_i, k_i)$ be an SMD LGWF-net, and $X_i = P_i \cup T_i$ for $i=1, 2$.
	An $\widehat{\alpha}$-morphism from $N_1$ to $N_2$ is a total surjective map $\varphi \colon X_1 \to X_2$, also denoted $\varphi \colon N_1 \to N_2$, where:
	\begin{enumerate}
		\item $\varphi(P_1) = P_2$, such that $\forall p_1 \in \dom{k_1} \colon k_2(\varphi(p_1)) = k_1(p_1)$.
		\item $\om{m_0^1} = m_0^2$.
		\item[2'.] $\om{m_f^1} = m_f^2$.
		\item $\forall t_1 \in T_1 \colon$ if $\om{t_1} \in T_2$, then $\om{\pre{t_1}} = \pre{\om{t_1}}$ and $\om{\post{t_1}} =\om{t_1}^\bullet$.
		\item[3'.] $\forall t_1 \in \dom{h_1} \cup \dom{\ell_1} \colon \om{t_1} \in T_2$ and
		\begin{enumerate}
			\item if $t_1 \in \dom{h_1}$, then $h_2(\om{t_1}) = h_1(t_1)$;
			\item if $t_2 \in \dom{\ell_1}$, then $\ell_2(\om{t_1}) = \ell_1(t_1)$.
		\end{enumerate}
		\item $\forall t_1 \in T_1 \colon$ if $\om{t_1} \in P_2$, then $\om{\neighb{t_1}}=\{\om{t_1}\}$.\label{tTOp}
		\item $\forall p_2 \in P_2 \colon$
		\begin{enumerate}
			\item $N_1(\varphi^{-1}(p_2))$ is an acyclic net or $\omr{p_2} \subseteq P_1$.\label{acycsub}
			\item $\forall p_1 \in \inp{N_1(\varphi^{-1}(p_2))} \colon \om{\pre{p_1}} \subseteq \pre{p_2}$ and if $\pre{p_2} \neq \varnothing$, then $\pre{p_1} \neq \varnothing$.\label{inP}
			\item $\forall p_1 \in \outp{N_1(\varphi^{-1}(p_2))} \colon \om{\post{p_1}} = \post{p_2}$.\label{outP}
			\item $\forall p_1 \in P_1 \cap \varphi^{-1}(p_2) \colon p_1 \notin \inp{N_1(\varphi^{-1}(p_2))} \Rightarrow \om{\pre{p_1}}=p_2 \text{ and }$\\ $p \notin \outp{N_1(\varphi^{-1}(p_2))} \Rightarrow \om{\post{p_1}} = p_2$.\label{INsub}
			\item $\forall p_1 \in P_1 \cap \varphi^{-1}(p_2)\colon$ there is a sequential component $N' = (P', T',$$ F')$ of $N_1$, such that $p_1 \in P'$, $\varphi^{-1}(\neighb{p_2}) \subseteq T'$.\label{scomp}
		\end{enumerate} 
	\end{enumerate}
\end{definition}

Thus, an $\widehat{\alpha}$-morphism is an $\alpha$-morphism (see Definition~\ref{alpham}) that also satisfies conditions 2' and 3' of Definition \ref{alphahm}.
When two LGWF-nets are related by an $\widehat{\alpha}$-morphism, their underlying GWF-nets are related by an $\alpha$-morphism.
That is why $\widehat{\alpha}$-morphisms inherit the properties of $\alpha$-morphisms, discussed in Section \ref{ssec:prop}. We use them to achieve soundness preservation in the AS-composition of LGWF-nets.

Moreover, it also follows from Definition \ref{alphahm} that labeled places in LGWF-nets are both preserved and reflected by $\widehat{\alpha}$-morphisms.
In other words, an image of a labeled place in $N_1$ is a labeled place in $N_2$, as well as an inverse image of a labeled place in $N_2$, is a labeled place in $N_1$.
Thus, there is a bijection between the sets of labeled places in two LGWF-nets related by an $\widehat{\alpha}$-morphism.
\smallskip

We next discuss our approach to ensuring that the AS-composition of sound LGWF-nets yields a sound LGWF-net.
Given two sound LGWF-nets $R_1$ and $R_2$, we aim to be sure that $R_1\! \circledast\!R_2$ is sound.
It is possible to compose $R_1$ and $R_2$ using Definition \ref{comp}, but their composition may not be sound, as shown in the previous section.
A technique described below is applied to achieve soundness of $R_1\! \circledast \!R_2$ by construction.

We start with abstracting $R_1$ and $R_2$ preserving labeled transitions.
Abstractions of LGWF-nets can be constructed by applying finite sequences of transformations, as we discussed in \cite{Pnse-20}.
Thus, we obtain two abstract LGWF-nets $N_1$ and $N_2$, such that there is an $\widehat{\alpha}$-morphism $\varphi_i \colon R_i \to N_i$ with $i=1,2$.
According to Theorem \ref{spres}, $N_1$ and $N_2$ are sound. 
These abstract models $N_1$ and $N_2$ are then composed by adding the same channels and synchronizing transitions with the same labels, as in $R_1$ and $R_2$.
Correspondingly, $N_1\! \circledast\! N_2$ is an interface pattern that describes interactions between LGWF-nets $R_1$ and $R_2$.
We \emph{verify} the soundness and structural properties of $N_1\! \circledast \!N_2$.

Given the sound interface pattern $N_1\! \circledast\! N_2$ and two $\widehat{\alpha}$-morphisms $\varphi_i \colon R_i \to N_i$ with $i=1, 2$, we construct two new LGWF-nets $R_1\! \circledast\! N_2$ and $N_1\! \circledast \!R_2$ representing two \emph{intermediate} refinements of the same abstract interface.
It is easy to see that these refinements of $N_1\! \circledast \!N_2$ preserve $\widehat{\alpha}$-morphisms, i.e., there is an $\widehat{\alpha}$-morphism from $R_1\! \circledast \!N_2$ to $N_1\! \circledast \!N_2$ as well as from $N_1\! \circledast \!R_2$ to $N_1\! \circledast\! N_2$.
For instance, an $\widehat{\alpha}$-morphism from $R_1\! \circledast \!N_2$ to $N_1 \!\circledast \!N_2$ is constructed from the original $\widehat{\alpha}$-morphism $\varphi_1 \colon R_1 \to N_1$ together with an identity mapping of asynchronously labeled transitions in $N_2$ and a corresponding mapping of synchronized transitions that can be refined in $R_1$.
Symmetrically, it is possible to show the construction of an $\widehat{\alpha}$-morphism from $N_1\! \circledast \!R_2$ to $N_1\! \circledast \!N_2$.

In Proposition \ref{irmor}, we additionally claim that an $\widehat{\alpha}$-morphism from an intermediate refinement $R_1 \circledast N_2$ to an interface pattern $N_1 \circledast N_2$ \emph{reflects} the connections among asynchronously labeled transitions with channels --- labeled places.
This reflection follows from the fact that labeled places are both preserved and reflected by $\widehat{\alpha}$-morphisms.
Further, we will use this property in the proof of the main theorem.

\begin{proposition}\label{irmor}
	Let $R_1, N_1, N_2$ be three LGWF-nets, such that there is an $\widehat{\alpha}$-morphism $\varphi_1 \colon R_1 \to N_1$.
	Let $N_1 \!\circledast \!N_2 = (P, T, F, m_0, m_f, h, \ell, k)$ and $R_1\! \circledast \!N_2 = (P', T', F',$ $m_0', m_f', h', \ell', k')$.
	Then there is an $\widehat{\alpha}$-morphism $\varphi_1' \colon (R_1 \!\circledast \!N_2) \to (N_1\! \circledast \!N_2)$, where $\forall p \in \dom{k}$ and $\forall t \in T \colon$
	\begin{enumerate}
		\item if $(p, t) \in F$, then $\{\omr{p}\} \times \omr{t} \subseteq F'$;
		\item if $(t, p) \in F$, then $\omr{t} \times \{\omr{p}\} \subseteq F'$.
	\end{enumerate}
\end{proposition}

Let $N_1 \circledast N_2$ shown in Fig.\,\ref{comp_ex2} represent an interface pattern.
We refine it with two component LGWF-nets $R_1$ and $R_2$, as shown in Fig.\,\ref{intref}.
The corresponding $\widehat{\alpha}$-morphisms are indicated by the shaded ovals.
The $\alpha$-morphism between the underlying GWF-nets $R_2^-$ and $N_2^-$ is provided in Fig. \ref{alpha_ex}, where $N_1$ corresponds to $R_2^-$, and $N_2$ corresponds to $N_2^-$.
\begin{figure}[h]
	\centering
	\begin{subfigure}[b]{0.5\textwidth}
		\centering
		\includegraphics[height=6.5cm]{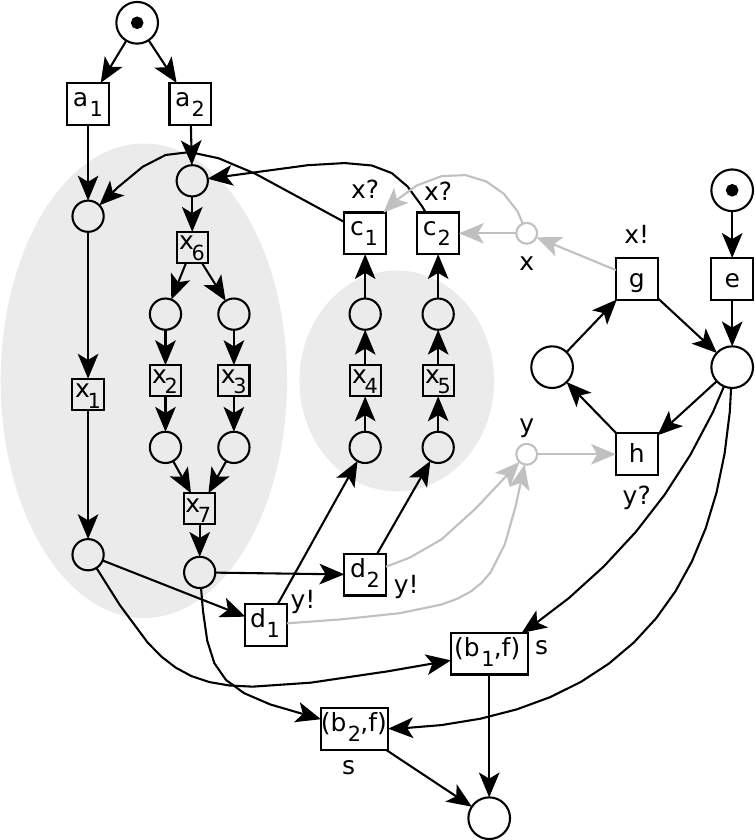}
		\caption{AS-composition $R_1 \circledast N_2$}
	\end{subfigure}%
	\begin{subfigure}[b]{0.5\textwidth}
		\centering
		\includegraphics[height=6.5cm]{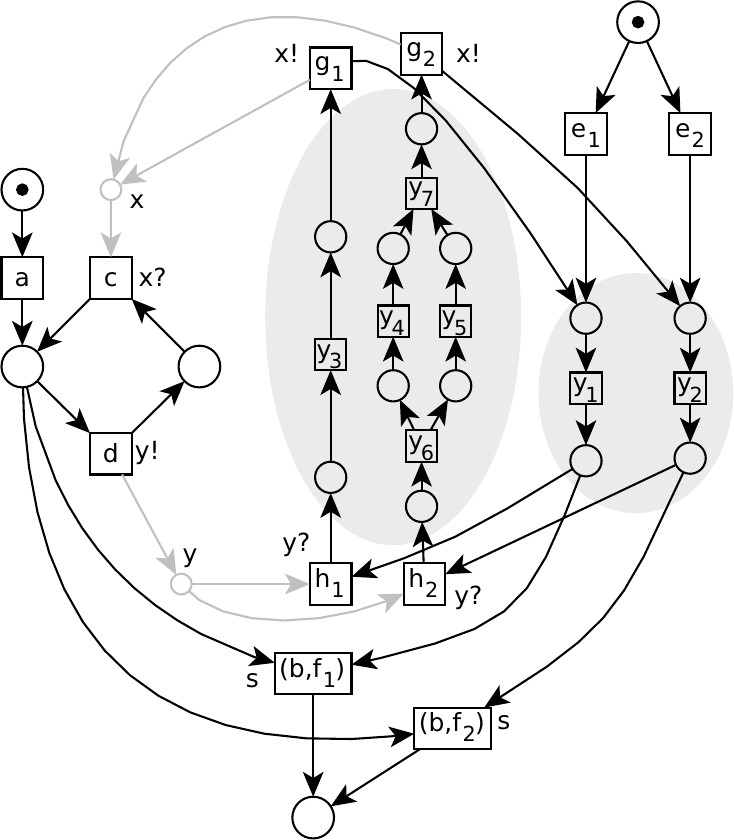}
		\caption{AS-composition $N_1 \circledast R_2$}
	\end{subfigure}
	\caption{Two intermediate refinements of $N_1 \circledast N_2$ from Fig. \ref{comp_ex2}} \label{intref}
\end{figure}

Theorem \ref{mainth} expresses the main result of our study. 
We prove that an $\widehat{\alpha}$-morphism from an intermediate refinement $R_1 \circledast N_2$ (symmetrically, from $N_1 \circledast R_2$) to an  interface pattern $N_1 \circledast N_2$ \emph{reflects} its soundness.
In proving this fact, we use the properties of $\alpha$-morphisms discussed in Section \ref{sec:alpha}, the characterization of reachable markings in the AS-composition given in Proposition \ref{markdec}, and the property considered in Proposition \ref{irmor}.

\begin{theorem}\label{mainth}
	Let $R_1, N_1, N_2$ be three sound LGWF-nets, such that there is an $\widehat{\alpha}$-morphism $\varphi_1 \colon$ $R_1\!\to\!N_1$.
	If $N_1 \circledast N_2$ is sound, then $R_1 \circledast N_2$ is sound.
\end{theorem}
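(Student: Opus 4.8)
The plan is to transfer soundness downward along the $\widehat{\alpha}$-morphism $\varphi_1' \colon (R_1 \circledast N_2) \to (N_1 \circledast N_2)$ supplied by Proposition \ref{chalpha}. Since $\varphi_1'$ restricts to an $\alpha$-morphism on the underlying GWF-nets, Proposition \ref{markpres} already gives a simulation of $R_1 \circledast N_2$ inside the sound interface pattern $N_1 \circledast N_2$. What I actually need is the opposite direction: that $\varphi_1'$ \emph{reflects} reachable markings and firings in the sense of Proposition \ref{markrefl}, so that firing sequences of the pattern can be pulled back up into the refined model. I would first record that $R_1 \circledast N_2$ is well-marked w.r.t. $\varphi_1'$ (its only refined subnets are those coming from $\varphi_1 \colon R_1 \to N_1$, and $R_1$ is well-marked), which is exactly what the reflection arguments of Section \ref{sec:alpha} presuppose.

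The delicate point, and the main obstacle, is that Proposition \ref{markrefl} requires the \emph{domain} of the morphism to be sound, which is precisely what is to be proved; applying it naively to $\varphi_1'$ would be circular. I would resolve this by re-examining where soundness is consumed. The reflection property rests on the local unfolding condition of Lemma \ref{unf}, i.e.\ on the fact that firing a transition in $\post{(\outp{R_1(\varphi_1^{-1}(p_2))})}$ empties the subnet refining $p_2$. For $\varphi_1'$ the refined places are exactly the places of $N_1$ refined by subnets of $R_1$, since the $N_2$-part and the channels $\dom{k}$ are not refined (Proposition \ref{chalpha}, Definition \ref{comp}). These subnets and their internal dynamics are unchanged by the composition: their internal transitions are sent to places of $N_1$, hence are neither synchronized nor attached to channels, and therefore fire ``freely'' in $R_1 \circledast N_2$. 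Consequently the emptying property, and with it the local condition of Lemma \ref{unf}, is inherited verbatim from the soundness of $R_1$ applied to $\varphi_1$, and it holds \emph{locally}, independently of whether $R_1 \circledast N_2$ is globally sound. This furnishes reflection of reachable markings and firings for $\varphi_1'$ without circularity, the inductive lifting of each firing step driving a subnet token to its output place by internal moves that do not disturb $N_2$ or the channels.

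With reflection established I would verify the three conditions of Definition \ref{sound}. For proper termination, take a reachable $m$; its image $\varphi_1'(m)$ is reachable in $N_1 \circledast N_2$ (Proposition \ref{markpres}), so by soundness of the pattern a sequence reaches $m_f^{N_1} \cup m_f^{N_2}$, which reflection lifts to $R_1 \circledast N_2$; firing further internal transitions of $R_1$ (proper termination of $R_1$) then brings the $R_1$-part to $m_f^{R_1}$, yielding $m_f'$, while condition (2') of Definition \ref{alphahm} guarantees the final markings correspond and the channels are empty. For clean termination, I would decompose any reachable $m \supseteq m_f'$ via Proposition \ref{markdec} into an $R_1$-part, an $N_2$-part and a channel marking; clean termination of the pattern removes surplus tokens on the unrefined $N_2$-places and on channels, and clean termination of $R_1$ removes surplus tokens inside the refined subnets, forcing $m = m_f'$. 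For liveness I would split the transitions of $R_1 \circledast N_2$ according to whether $\varphi_1'$ sends them to a transition or a place of the pattern: transitions over transitions are enabled via liveness of $N_1 \circledast N_2$ together with reflection of firings, while transitions internal to a refined subnet (sent to a place) are enabled using liveness of $R_1$ and the observation that such internal firings need no cooperation from $N_2$ and can be replayed after reaching, by reflection, a marking of the subnet's input places. The only steps demanding genuine care are the locality argument of the second paragraph and the bookkeeping that the lifted terminal marking is exactly $m_f'$ rather than merely a marking above it.
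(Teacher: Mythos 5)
Your proposal is correct, but it takes a genuinely different route from the paper's proof. You rightly identify the circularity obstacle: Proposition \ref{markrefl} demands soundness of the morphism's \emph{domain}, so it cannot be applied directly to $\varphi_1' \colon (R_1 \circledast N_2) \to (N_1 \circledast N_2)$. Your fix is to establish reflection for $\varphi_1'$ itself, by observing that the only properly refined places of $N_1 \circledast N_2$ are the unlabeled places of $N_1$, that their refining subnets are untouched by the AS-composition (internal transitions are unlabeled by condition (3') of Definition \ref{alphahm}, internal places are unlabeled since labeled places are in bijection), and hence that the local unfolding conditions of Lemma \ref{unf} transfer verbatim from $\varphi_1$, where they hold because $R_1$ is sound. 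The paper never claims reflection for $\varphi_1'$: instead it maps a reachable marking forward with Proposition \ref{markpres}, takes a terminating firing sequence $w$ of the sound pattern, decomposes it by interleaving into segments over $T_2^a$, single transitions of $T_1^a$, and synchronized transitions of $T_{sync}$, and lifts each piece separately --- the $N_2$-segments identically, the remaining pieces via Proposition \ref{markrefl} applied only to $\varphi_1$ (whose domain $R_1$ \emph{is} sound by hypothesis), gluing the pieces together with Propositions \ref{markdec} and \ref{chalpha}. Both arguments exploit the same underlying locality of refinement, but the packaging differs: yours buys a stronger, reusable intermediate statement (reflection for $\varphi_1'$ without presupposing soundness of its domain) and a more uniform verification of the three conditions of Definition \ref{sound}, at the cost of unpacking Proposition \ref{markrefl} into its two constituent steps (soundness implies the local conditions; the local conditions imply reflection), since the paper only states the composite implication; the paper's packaging stays within its stated propositions but pays with the more intricate sequence-decomposition bookkeeping. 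One further point in your favour: for clean termination you decompose the covering marking via Proposition \ref{markdec} and invoke cleanness of $R_1$ to exclude surplus tokens lying \emph{inside} refined subnets; this case genuinely matters, because such tokens map into $m_f^1$ and thus would not make the image marking strictly cover $m_f$ --- a case the paper's computation $\varphi_1'(m_3) = m_3$ passes over silently.
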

\begin{proof}
	By Proposition \ref{irmor}, there is an $\widehat{\alpha}$-morphism $\varphi_1' \colon (R_1\! \circledast\! N_2) \to (N_1\! \circledast\! N_2)$.
	
	We first fix a notation used in the proof.
	Let $N_i = (P_i, T_i, F_i, m_0^i, m_f^i, h_i, \ell_i,$ $k_i)$, where $i=1, 2$, and $R_1 = (\underline{P}_1, \underline{T}_1, \underline{F}_1, \underline{m}_0^1,$ $\underline{m}_f^1, \underline{h}_1, \underline{\ell}_1, \underline{k}_1)$.
	Also, let $N_1\! \circledast\! N_2 = (P, T, F, m_0, m_f, h, \ell, k)$, and $R_1\! \circledast\! N_2 = (P', T', F', m_0', m_f', h', \ell', k')$.
	
	We show that $R_1\! \circledast\! N_2$ satisfies the three behavioral conditions of a sound LGWF-net imposed by Definition \ref{sound}.
	
	\textbf{1.} Take $m' \in \reach{m_0'}$.
	By Proposition \ref{markdec} for $R_1 \circledast N_2$, $m' = (\underline{m}_1 \setminus \dom{\underline{k}_1}) \cup (m_2 \setminus \dom{k_2}) \cup m_c$, where $\underline{m}_1 \in \reach{\underline{m}_0^1}$, $m_2 \in \reach{m_0^2}$ and $m_c \in \dom{k'}$.
	By Proposition \ref{markpres} for $\varphi_1'$, $\varphi_1'(m') = m \in \reach{m_0}$.
	By Proposition \ref{markdec} for $N_1 \circledast N_2$, $m = (m_1 \setminus \dom{k_1}) \cup (m_2 \setminus \dom{k_2}) \cup m_c$, where $m_2 \setminus \dom{k_2}$, $m_c$ are the same as in $m'$, and $m_1 = \varphi_1(\underline{m}_1)$ (by Proposition \ref{markpres} for $\varphi_1$).
	Since $N_1\! \circledast\! N_2$ is sound, $\exists w \in F\!S(N_1\! \circledast\! N_2) \colon m\reach{w}m_f$.
	By Definition \ref{comp}, recall that $T = T_1^a \cup T_2^a \cup T_{sync}$ in $N_1\! \circledast\! N_2$, where $T_{sync} = \{(t_1, t_2) \, \vert \, t_1 \in \dom{\ell_1}, t_2 \in \dom{\ell_2}, \text{ and } \ell_1(t_1) = \ell_2(t_2) \}$ and $T_i^a = T_i \setminus \dom{\ell_i}$ with $i=1, 2$.
	Using interleaving semantics for Petri nets, we can write $w = w_2^1v$, such that $v=\varepsilon$ or $v=t_1^1w_s^1w_2^2t_1^2\dots$, where $w_2^i \in (T_2^a)^*$, $t_1^i \in T_1^a$ and $w_s^i \in T_{sync}^*$ with $i \geq 1$.
	Firstly, each sub-sequence $w_2^i$ can be obviously simulated on the LGWF-net $N_2$ in $R_1\! \circledast N_2$, since $\varphi_1'$ reflects connections with labeled places (by Proposition \ref{irmor}).
	Secondly, since $R_1$ is sound, $\varphi_1$ reflects reachable markings and transitions firings (by Proposition \ref{markrefl}).
	Thus, there is a reachable marking $\underline{m}_1^i$ in $R_1$, belonging to $\varphi_1^{-1}(m_1^i)$ for some $m_1^i \in \reach{m_0^1}$ in $N_1$.
	If $m_1^i \reach{t_1^i}$ in $N_1$, then $\underline{m}_1^i$ enables all transitions in $\varphi_1^{-1}(t_1^i)$ in $R_1$ as well.
	Moreover, these transitions are also enabled in $R_1\! \circledast\! N_2$, since $\varphi_1'$ reflects connections to labeled places (by Proposition \ref{irmor}).
	Finally, since $N_1\! \circledast\! N_2$ is sound, $\exists m \in \reach{m_0} \colon m \reach{(t_1, t_2)}$ for all $(t_1, t_2)$ in $w_s^i$.
	By Proposition \ref{markdec}, $m = m_1 \cup m_2$, where $m_1 \in \reach{m_0^1}$ and $m_2 \in \reach{m_0^2}$ (here $m_c = \varnothing$, since transitions in $T_{sync}$ are not connected with labeled places).
	Moreover, $m_1 \reach{t_1}$ and $m_2 \reach{t_2}$.
	By Proposition \ref{markrefl} for $\varphi_1$, there is a reachable marking $\underline{m}_1'$ in $R_1$, such that $\underline{m}_1' = \varphi_1^{-1}(m_1)$ and $\forall \underline{t}_1 \in \varphi_1^{-1}(t_1) \colon \underline{m}_1' \reach{\underline{t}_1}$.
	Correspondingly, a reachable marking $\underline{m}_1' \cup m_2$ in $R_1\! \circledast\! N_2$ enables synchronized transitions $(\underline{t}_1, t_2)$ for all $\underline{t}_1 \in \varphi_1^{-1}(t_1)$.
	Hence, we reflect the complete firing sequence $w \in F\!S(N_1\! \circledast\! N_2)$ on $R_1 \circledast N_2$ reaching its final marking $m_f'$.
	
	\textbf{2.} Suppose by contradiction $\exists m' \in \reach{m_0'} \colon m' \supseteq m_f'$ and $m' \neq m_f'$.
	By Definition~\ref{comp}.2, $m_f' = \underline{m}_f^1 \cup m_f^2$.
	Thus, $m' = \underline{m}_f^1 \cup m_f^2 \cup m_3$.
	By Proposition \ref{markpres} for $\varphi_1'$, we have that $\varphi_1'(m') \in \reach{m_0}$.
	Then, $\varphi_1'(m') = \varphi_1'(\underline{m}_f^1) \cup \varphi_1'(m_f^2) \cup \varphi_1'(m_3) = \varphi_1(\underline{m}_f^1) \cup m_f^2 \cup m_3 = m_f^1 \cup m_f^2 \cup m_3 = m_f \cup m_3$.
	This reachable marking $m_f \cup m_3$ strictly covers the final marking $m_f$ in $N_1\! \circledast\! N_2$, which contradicts the assumption of its soundness.
	
	\textbf{3.} We show that $\forall t' \in T'\,\exists m' \in \reach{m_0'} \colon m' \reach{t'}$.
	By Proposition \ref{markdec}, $m' = (\underline{m}_1 \setminus \dom{\underline{k}_1}) \cup (m_2 \setminus \dom{k_2}) \cup m_c$, where $\underline{m}_1 \in \reach{\underline{m}_0^1}$, $m_2 \in \reach{m_0^2}$ and $m_c \in \dom{k'}$.  
	By Definition \ref{comp}.3, $\forall t' \in T' \colon t' \in\underline{T}_1^a$ or $t' \in T_2^a$ or $t' \in T_{sync}$.
	If $t' \in T_2^a$, then $\exists m \in \reach{m_0} \colon m\reach{t'}$, since $N_1\! \circledast\! N_2$ is sound.
	By Proposition \ref{irmor}, $(m_2 \setminus \dom{k_2}) \cup m_c$ in $R_1\! \circledast\! N_2$ also enables $t'$.
	If $t' \in \underline{T}_1^a$, then there are two cases.
	If $\varphi_1'(t') \in P$, then $t'$ is not connected to labeled places.
	Since $R_1$ sound, $\underline{m}_1$ enables $t'$.
	If $\varphi_1'(t') \in T$, then take $t \in T$, such that $\varphi_1'(t') = t$ (by the surjectivity of $\varphi_1'$).
	Since $N_1\! \circledast\! N_2$ is sound, $\exists m \in \reach{m_0} \colon m \reach{t}$.
	By Proposition \ref{markrefl} and \ref{irmor}, the reachable marking $\underline{m}_1 \cup m_c$ in $R_1\! \circledast\! N_2$ (being the inverse image of $m$ under $\varphi_1'$) enables $t'$.
	As for the case when $t' \in T_{sync}$, we have already considered it above when proving reachability of the final marking in $R_1\! \circledast\! N_2$. \qed
\end{proof}

Having two $\widehat{\alpha}$-morphisms from the intermediate refinements $R_1\! \circledast\!N_2$ and $N_1\!\circledast\!R_2$ to the same abstract interface $N_1 \!\circledast\! N_2$, we can compose $R_1 \!\circledast \!N_2$ and $N_1\! \circledast\! R_2$ using the composition defined in~\cite{Bernardinello2013}.
It is required to (a) substitute subnets in $R_1\!\circledast\!N_2$ and $N_1\! \circledast\!R_2$ for the corresponding places in $N_1\!\circledast\!N_2$;
(b)~replace transitions in $N_1\!\circledast\!N_2$ with their inverse images merging those with identical images.
As a result, we obtain $N$ and two $\widehat{\alpha}$-morphisms from $N$ to  $R_1\!\circledast\!N_2$ and $N_1\!\circledast\!R_2$, such that the diagram shown in Fig.\,\ref{diagram1} commutes, i.e., $\varphi_1' \circ \varphi_1'' = \varphi_2' \circ \varphi_2''$, where $\varphi_1' \colon (R_1\!\circledast\! N_2) \to (N_1 \!\circledast\! N_2)$, $\varphi_2' \colon (N_1 \!\circledast\! R_2) \to (N_1 \!\circledast\! N_2)$, $\varphi_1'' \colon N \to (R_1 \!\circledast\! N_2)$, and $\varphi_2'' \colon N \to (N_1 \!\circledast\! R_2)$.

\begin{figure}[h]
	\centering
	\begin{subfigure}[b]{0.5\textwidth}
		\centering
		\includegraphics[height=6.5cm]{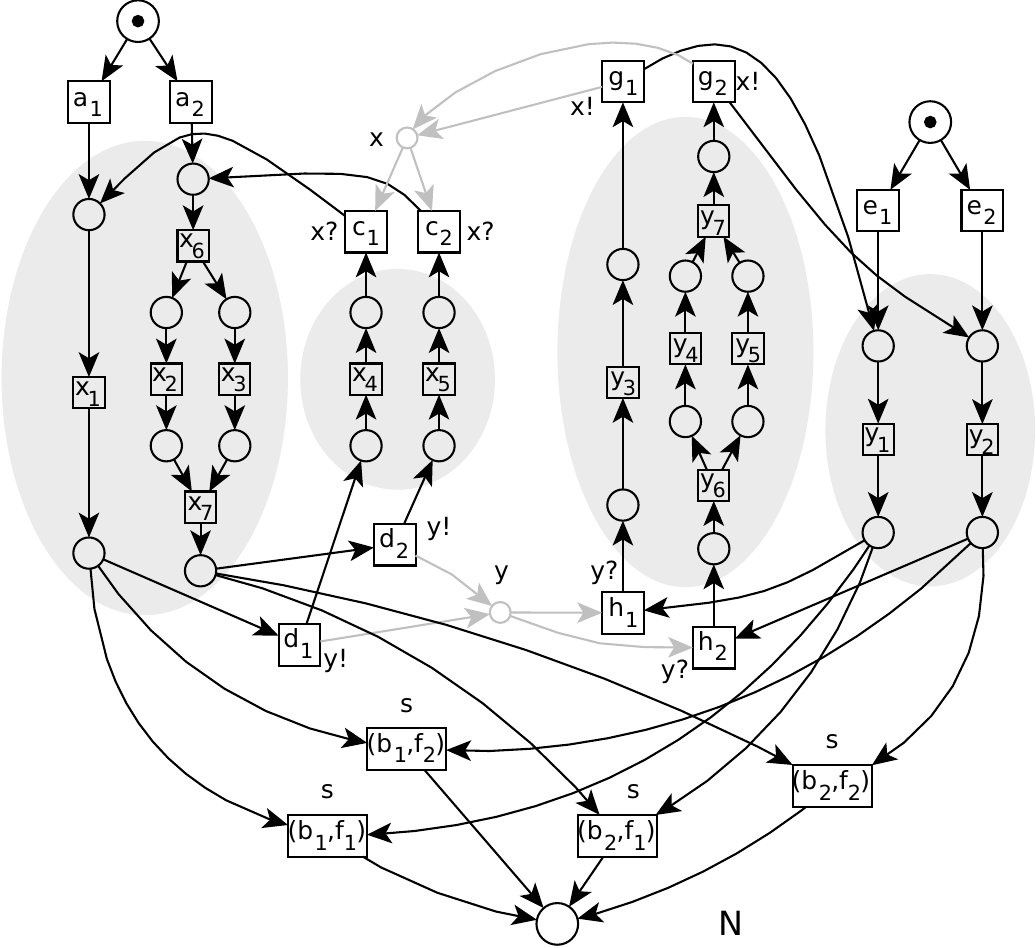}
		\caption{$N$, isomorphic to $R_1 \circledast R_2$\label{comp_res}}
	\end{subfigure}%
	\begin{subfigure}[b]{0.5\textwidth}
		\centering
		\includegraphics[height=5.5cm]{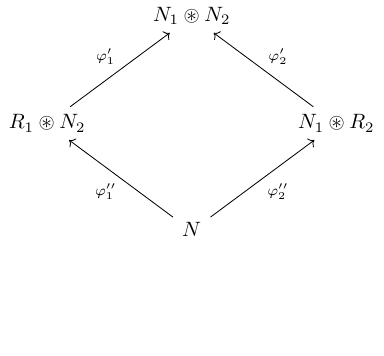}
		\caption{diagram\label{diagram1}}
	\end{subfigure}
	\caption{Composition of $R_1 \circledast N_2$ and $N_1 \circledast R_2$ based on $\widehat{\alpha}$-morphisms}
\end{figure}

Another way is to construct intermediate refinements again by refining $N_2$ in $R_1\! \circledast\! N_2$ ($N_1$ in $N_1\! \circledast \!R_2$).
As a result, we obtain $R_1 \circledast R_2$, isomorphic to the previously constructed composition $N$ up to renaming of synchronized transitions.
According to Proposition \ref{irmor}, there are two $\widehat{\alpha}$-morphisms from $R_1 \circledast R_2$ to $R_1 \circledast N_2$ as well as to $N_1 \circledast R_2$. 
According to Theorem \ref{mainth}, since $R_1 \circledast N_2$ ($N_1 \circledast R_2$) is sound, $R_1 \!\circledast\! R_2$ is also sound.
Therefore, we have also shown that it is possible to simultaneously refine $N_1$ and $N_2$ in the sound abstract interface with sound LGWF-nets $R_1$ and $R_2$ (see Corollary~\ref{cor:main}).

In Fig.\,\ref{comp_res}, we show the result of composing, by means of $\widehat{\alpha}$-morphisms, intermediate refinements $R_1 \circledast N_2$ and $N_1 \circledast R_2$ provided in Fig.\,\ref{intref}. 
This composition corresponds to the direct AS-composition of $R_1$ and $R_2$.

\begin{corollary}\label{cor:main}
	Let $R_1, R_2, N_1, N_2$ be four sound LGWF-nets, such that there is an $\widehat{\alpha}$-morphism $\varphi_i \colon R_i  \to N_i$ for $i=1, 2$. 
	If $N_1\!\circledast\! N_2$ is sound, then $R_1\! \circledast \!R_2$ is sound.
\end{corollary}

\section{Related Works}\label{sec:relw}

There is a considerable amount of literature devoted to the compositional modeling of Petri nets, including, among the others, \cite{Kotov78,BoxCalc,Reisig13,Valk2003}.
The recent works \cite{Reisig2018,ReiComp20,Her2021} by Wolfgang Reisig are devoted to a systematic study of compositional modeling principles applicable to various formalisms and notations, including (Colored) Petri nets, BPMN (Business Process Modeling and Notation) process models, and UML (Unified Modeling Language) diagrams.

Several works studied whether a composition of \emph{open} Petri nets preserves semantical properties of components.
Paolo Baldan et al.~\cite{Baldan01} introduced a class of open Petri nets and analyzed the categorical framework behind open Petri net composition constructed via place and transition fusion. 
Kees van Hee et al.~\cite{vanHee2010,vanHee2011} considered a soundness-preserving refinement of places in open WF-nets with sound (composition of) WF-nets.

A class of superposed automata nets (SA-nets) was introduced by Fiorella De Cindo et al.~in \cite{SANets82}.
SA-nets were among the first formalisms to model systems with synchronously communicating sequential components via transition fusion.

Serge Haddad et al.~\cite{Haddad13} defined the semantics of \emph{input/output} (I/O) Petri nets and their composition constructed through the insertion of asynchronous channels. 
The authors studied channel properties related to message consumption and interaction termination.
It was shown that these properties are decidable and preserved by an asynchronous composition of I/O-Petri nets. 

Younes Soussi and G\'{e}rard Memmi \cite{Sous91,SousMem1} considered the problem of liveness preservation in a composition of Petri nets through an intermediate model of communication medium. 
Their approach is based on global and rigid structural constraints.

Christian Stahl and Karsten Wolf \cite{Wolf09} applied \emph{operating guidelines} for compositional verification of deadlock-freeness in the composition of open Petri nets. 
Their work also considered a problem to decide if one can replace a component in a composition preserving its semantical properties.

Inheritance of behavioral properties of Petri nets is also achieved with the help of morphisms~--- structural graph mappings. 
The composition of Petri nets via morphisms was a subject of many works, including, for example, \cite{Winskel67,monoids90,Nielsen92,Bednarczyk03,Padberg2003,Fabre06,Nhat,Desel2010}. 
We note that morphisms provide a natural and rigid framework to explore properties of Petri net composition.

In our study, the soundness preservation in the AS-composition of LGWF-nets is achieved with the help of a restriction of $\alpha$-morphisms, originally defined by Luca Bernardinello et al. in \cite{Bernardinello2013}. 
They allow us to abstract subnets and refine places in Petri nets. In addition, $\alpha$-morphisms preserve and reflect reachable markings and induce the bisimulation between related models.

Several works have discussed architectural and semantical aspects of compositional approaches to WF-net modeling.
Juliane Siegeris and Armin Zimmermann \cite{WFRes} considered specific patterns of WF-net interactions preserving relaxed soundness of components admitting executions that may not terminate in a final state.
The work \cite{Lomazova13} by Irina Lomazova and Ivan Romanov addressed the problem of preserving service correctness in the context of resources produced and consumed by interacting services. 
The earlier work \cite{Lomazova10} by I.\,Lomazova also proposed an approach to soundness-preserving re-engineering of hierarchical WF-nets with a two-level structure.

Yudith Cardinale et al., in the survey \cite{Cardinale13}, discussed a variety of approaches to compositional modeling of web services. 
The authors stressed that there is a lack of service execution techniques based on different classes of Petri nets.
In particular, Victor Pancratius and Wolffried Stucky \cite{Pankratius05}~considered the composition of WF-nets representing web service behavior with the help of adapted relational algebra operations.

The main difference in our work is that the AS-composition of labeled GWF-nets leaves asynchronous channels and synchronous transitions open for other components to connect.
Apart from that, refinement of LGWF-nets is defined at the level of a complete net rather than specific places and transitions.
Refinement preserves the soundness of LGWF-net components and an interface, which describes interactions at the abstract level.

In our earlier paper \cite{ATAED-18}, we discussed a restricted case of modeling semantically correct asynchronous communication of workflow nets.
Results presented in this paper naturally extend the previous ones and provide the formal backgrounds for constructing sound workflow nets from sound models of interacting components.

\section{Conclusions}\label{sec:concl}
This paper has studied the theoretical backgrounds for a correct composition of interacting workflow net components.
We have developed an approach to model asynchronous and synchronous interactions among workflow nets using two kinds of transition labels.
Correspondingly, we have defined an asynchronous-synchronous composition (AS-composition). 
AS-composition may not preserve the soundness of interacting components.
To solve this problem, we use an interface that describes how WF-nets interact.

An interface net represents an abstract view of a complete system. 
There is a subnet in an interface corresponding to component behavior.
The correspondence between an interface and components is established with the help of $\alpha$-morphisms.
The structural and behavioral properties of the abstraction/refinement relation based on $\alpha$-morphisms have helped us to prove that refining subnets in an interface with sound WF-nets preserves the interface soundness.

We identify two main advantages of the proposed compositional approach.
Firstly, the problem of constructing a correct composition of interacting workflow net components is solved in the abstract model.
Refinement of abstract places requires checking structural constraints and only local behavioral constraints for properly refined places.
Sound models of abstract interfaces can be reused with different component refinements.
Secondly, AS-composition leaves asynchronous channels and synchronous transitions of workflow net components open for others to interact.

Our future research will be focused on the following aspects.
It is planned to relax constraints of $\alpha$-morphisms to enable the abstraction of cyclic subnets.
Thus, an acyclic abstract interface will define a broader class of sound WF-net compositions.
We also plan to systematically identify typical interface patterns that a system architect can use to organize smooth interactions among components in large-scale distributed systems.
Our earlier works \cite{TMPA-17,Macspro-19} studied patterns of asynchronous interactions within a restricted case of two interacting components.



 \bibliographystyle{elsarticle-num} 
\bibliography{mybibfile-noDOI}

%
%
%
%

\end{document}